\newcommand\blfootnote[1]{%
  \begingroup
  \renewcommand\thefootnote{}\footnote{#1}%
  \addtocounter{footnote}{-1}%
  \endgroup
}
\newcommand{\cS}{{\mathcal S}}
\newcommand{\cH}{{\mathcal H}}
\newcommand{\bP}{{\mathbb P}}
\newcommand{\bE}{{\mathbb E}}
\theoremstyle{definition}
\theoremstyle{plain}
\newtheorem{lemma}{Lemma}
\newtheorem{theorem}{Theorem}
\newtheorem{proposition}{Proposition}
\newtheorem{corollary}{Corollary}
\title{Effects of discordance between species and gene trees on
phylogenetic diversity conservation}
\author[1,2$\ast$]{Kristina Wicke}
\author[3]{Mareike Fischer}
\author[4]{Laura Kubatko}
\affil[1]{Department of Mathematics, The Ohio State University, Columbus OH, USA}
\affil[2]{Department of Mathematical Sciences, New Jersey Institute of Technology, Newark NJ, USA}
\affil[3]{Institute of Mathematics and Computer Science, University of Greifswald, Greifswald, Germany}
\affil[4]{Department of Statistics, Department of Evolution, Ecology, and Organismal Biology, The Ohio State University, Columbus OH, USA}
\date{}
\begin{document}
\maketitle

\begin{abstract} 
\noindent Phylogenetic diversity indices such as the Fair Proportion (FP) index are frequently discussed as prioritization criteria in biodiversity conservation. They rank species according to their contribution to overall diversity by taking into account the unique and shared evolutionary history of each species as indicated by its placement in an underlying phylogenetic tree. 
Traditionally, phylogenetic trees were inferred from single genes and the resulting gene trees were assumed to be a valid estimate for the species tree, i.e., the ``true'' evolutionary history of the species under consideration. However, nowadays it is common to sequence whole genomes of hundreds or thousands of genes, and it is often the case that conflicting genealogical histories exist in different genes throughout the genome, resulting in discordance between individual gene trees and the species tree. Here, we analyze the effects of gene and species tree discordance on prioritization decisions based on the FP index. 
In particular, we consider the ranking order of taxa induced by (i) the FP index on a species tree, and (ii) the expected FP index across all gene tree histories associated with the species tree. On one hand, we show that for particular tree shapes, the two rankings always coincide. On the other hand, we show that for all leaf numbers greater than or equal to five, there exist species trees for which the two rankings differ. 
Finally, we illustrate the variability in the rankings obtained from the FP index across different gene tree and species tree estimates for an empirical multilocus mammal data set.
\end{abstract}

\textit{Keywords:} Biodiversity conservation; Fair Proportion index; gene tree; multispecies coalescent; phylogenetic diversity; species tree

\blfootnote{$^\ast$Corresponding author\\ \textit{Email address:} \url{kristina.wicke@njit.edu}}

\section{Introduction} \label{Sec_Intro}
Phylogenetic diversity indices such as the Fair Proportion (FP) index \citep{Isaac2007,Redding2003}, Equal Splits (ES) index \citep{Redding2006}, and Shapley value \citep{Haake2008}, have become popular tools for prioritizing species for conservation (e.g., \citet{Isaac2007, Redding2006, Redding2008,Redding2014, Vanewright1991, Vellend2011}). They quantify the importance of species for overall biodiversity based on their placement in an underlying phylogenetic tree and can thus, next to other criteria such as threat status, serve as prioritization criteria in conservation decisions. While these indices differ in their definitions and properties, they all depend on the underlying phylogenetic tree and its branch lengths. In particular, different phylogenetic trees for the same set of species or identical phylogenetic trees with different branch lengths, may result in different prioritization orders obtained by any of these phylogenetic diversity indices. 

Until recently, phylogenetic trees were usually inferred from single genes and the resulting gene trees were assumed to be a valid estimate for the species tree, i.e., the ``true'' evolutionary history of the species trees under consideration. However, with the advent of rapid sequencing technologies in recent years, it is now common to sequence whole genomes composed of hundreds or thousands of genes, and it is well known that due to various biological phenomena, such as lateral gene transfer and incomplete lineage sorting, conflicting genealogical histories exist in different genes throughout the genome, resulting in discordance between individual gene trees and the species tree (e.g., \citet{Maddison1997, Nichols2001, Pamilo1988}). 

In this paper, we analyze the effects of gene and species tree discordance on prioritization decisions based on the FP index. We chose the FP index as it is employed (under the name `evolutionary distinctiveness') in the so-called ``EDGE of Existence'' project established by the Zoological Society of London, a conservation initiative focusing specifically at threatened species that represent a high amount of unique evolutionary history (\cite{Isaac2007}; see also \url{https://www.edgeofexistence.org/}). Moreover, while the FP index itself lacks a direct biological motivation, it coincides with the so-called Shapley value, a result that was first shown by \citet{Fuchs2015}. The Shapley value, originally a concept from cooperative game theory introduced by Lloyd Shapley \citep{Shapley1953}, on the other hand, reflects the average contribution of a species to overall diversity and is characterized by certain desirable axioms (see, e.g., \citet{Haake2008}) that motivate its use (and hence that of the FP index) as a prioritization criterion in conservation planning.

Moreover, we focus on gene tree-species tree incongruence caused by incomplete lineage sorting by employing the multispecies coalescent model \citep{Degnan2009}. Under this model, given a species tree (an ultrametric phylogenetic tree with branch lengths), a distribution on possible gene tree histories evolving within the species tree is obtained, and the species tree as well as the distribution of gene tree histories are the basis for our analysis. More precisely, we compare the prioritization order of taxa induced by the FP index on the species tree with the prioritization order obtained when averaging the FP index over all gene tree histories evolving within the species tree. 

The structure of this paper is as follows. We first introduce all relevant notation and concepts. We then analyze circumstances under which the FP index on a given species tree and the expected FP index across all gene tree histories evolving within the species tree induce the same ranking (Theorem \ref{Thm_Caterpillar_Pseudocaterpillar}). Afterwards, we show that for all $n \geq 5$, there exists a species tree on $n$ leaves such that the two rankings (i.e., the ranking obtained from the species tree and the ranking obtained from averaging over all gene tree histories) differ (Theorem \ref{Thm_rankswap}). Finally, we complement our theoretical results by analyzing the variability in the rankings obtained from the FP index across different gene trees and species trees for an empirical multilocus mammal data set. We end by discussing our results and indicating directions for future research.

\section{Notation and preliminaries} \label{Sec_Prelim}
To formally state our results, we need some terminology and notation.
Throughout this manuscript, $X$ denotes a non-empty finite set (of taxa or species). If not stated otherwise, $X=\{x_1, \ldots, x_n\}$.

\paragraph*{Species trees, gene tree histories, and probabilities under the multispecies coalescent model.}
\noindent A \emph{species tree} $\cS=(S,\boldsymbol{\tau})$ on $X=\{x_1, \ldots, x_n\}$ is a pair consisting of a rooted binary phylogenetic tree $S$ with leaf set $X$ (i.e., a rooted binary tree whose leaves are identified with $X$) and a vector of interval lengths $\boldsymbol{\tau}=(\tau_0, \tau_1, \ldots, \tau_{n-1})$, where $\tau_i$ denotes the length of the interval between the $i^{\text{th}}$ and $(i+1)^{\text{th}}$ speciation event in $S$ measured in \emph{coalescent units}. Note that we assume here that no two speciation events in $S$ occur at exactly the same time (this is motivated by the fact that under typical models for species trees, the probability for two speciation events to occur simultaneously is zero). Moreover, note that the length of $\tau_{n-1}$, i.e., the length of the interval above the root of $S$, is infinite. Finally, note that the definition of $\boldsymbol{\tau} $ as a vector of interval lengths between speciation times implies that $\cS=(S, \boldsymbol{\tau})$ is \emph{equidistant} or \emph{ultrametric}, meaning that the length of the path from the root of $\cS$ to any leaf of the species tree is the same. This is also commonly referred to as the \emph{molecular clock assumption}. Moreover, note that we sometimes refer to the length of the path from the root of $\cS$ to any leaf as the \emph{height} of $\cS$. 

Given a species tree $\cS=(S,\boldsymbol{\tau})$, we use the concept of a \emph{gene tree history $h$} associated with $\cS$ to indicate a sequence of coalescent events as well as the populations and time intervals in which they occur. Informally, a gene tree history $h$ can be thought of as a phylogenetic tree with leaf set $X$ evolving within $\cS$ (Figure \ref{Fig_5Leaves_Notation}). Note that the set of gene tree histories associated with $\cS$ is finite, and we denote it by $\mathcal{H}(\cS)$.

Moreover, we associate with each history $h$ a vector of coalescent times $\boldsymbol{t}_h=(t_1, \ldots, t_{n-1})$, where, moving back in time, two distinct lineages coalesce or merge into a single lineage. For example, referring to Figure \ref{Fig_5Leaves_Notation}, at time $t_1$, the lineages corresponding to $x_1$ and $x_2$ merge into a single lineage. Note that there are infinitely many choices for the values $t_1, \ldots, t_{n-1}$ satisfying the constraints of a history $h$ evolving within $\cS$, and in the following we will often integrate over all possible coalescent times consistent with $h$. Moreover, note that we assume that no two coalescences occur at exactly the same time, i.e., $t_i \neq t_j$ for all $i,j \in \{1, \ldots, n-1\}$. 
Thus the coalescent times induce a ranking of the interior vertices of a gene tree history $h$ according to their distance from the root, where the interior vertex with the largest distance (i.e., the vertex corresponding to the most recent coalescence event) is assigned rank 1 and the root itself is assigned rank $n-1$. In particular, we consider two histories to be distinct if their coalescent times induce different rankings of the interior vertices (Figure \ref{Fig_h1h2}). Formally, this corresponds to the notion of \emph{ranked gene tree histories} \citep{Degnan2012, Degnan2012a} (similarly, \emph{ranked species trees} are defined as species tree topologies together with an order on the speciation events). However, for brevity, we speak of gene tree histories or histories for short.  In addition, when we refer to a history $h$, we usually mean $h$ and its associated vector of coalescent times $\boldsymbol{t}_h$ as well as the information in which time interval each coalescent event occurred. Note that we explicitly keep track of the time intervals and, next to the induced rankings of interior vertices, distinguish histories based on this information. For instance, considering the top part of Figure \ref{Fig_h1h2}, the scenario where the coalescent event $t_1$ occurs in time interval $\tau_1$ (rather than $\tau_2$) in the population ancestral to $x_1$ and $x_2$, is considered a different history, $h_3$ say, that is distinct from $h_1$.

\begin{figure}[htbp]
    \centering
    \includegraphics[scale=0.175]{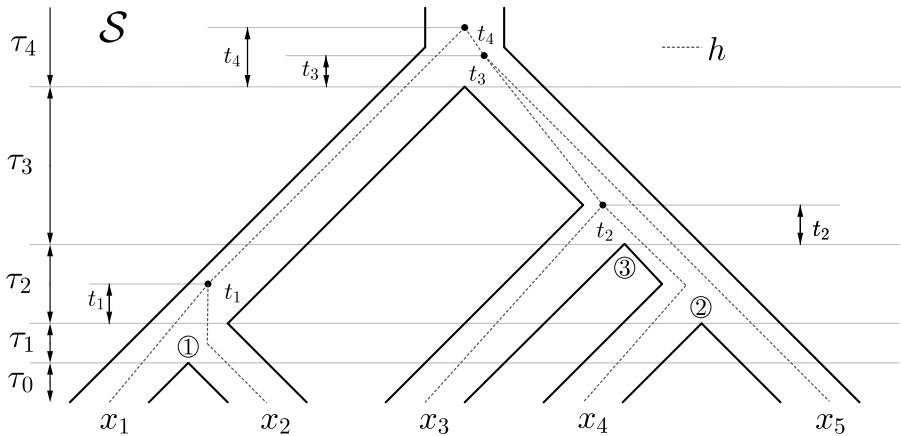}
    \caption{Species tree $\cS=(S,\boldsymbol{\tau})$ on $X=\{x_1, \ldots, x_5\}$ and history $h$ evolving within $\cS$. Note that $\cS$ and $h$ depict different evolutionary relationships. Intervals of time between speciation events (horizontal lines) are denoted by 
    the $\tau_i$ (left side of the figure), while times corresponding to coalescent events (black dots) are denoted by the $t_i$ (coalescent times are measured from the most recent speciation event). Circled numbers indicate a ranking on the speciation events with  \raisebox{.5pt}{\textcircled{\raisebox{-.9pt} {1}}} being the most recent and \raisebox{.5pt}{\textcircled{\raisebox{-.9pt} {3}}} the most ancient speciation event, and $\cS$ together with this ordering defines a ranked species tree.}
    \label{Fig_5Leaves_Notation}
\end{figure}

\begin{figure}[htbp]
    \centering
    \includegraphics[scale=0.12]{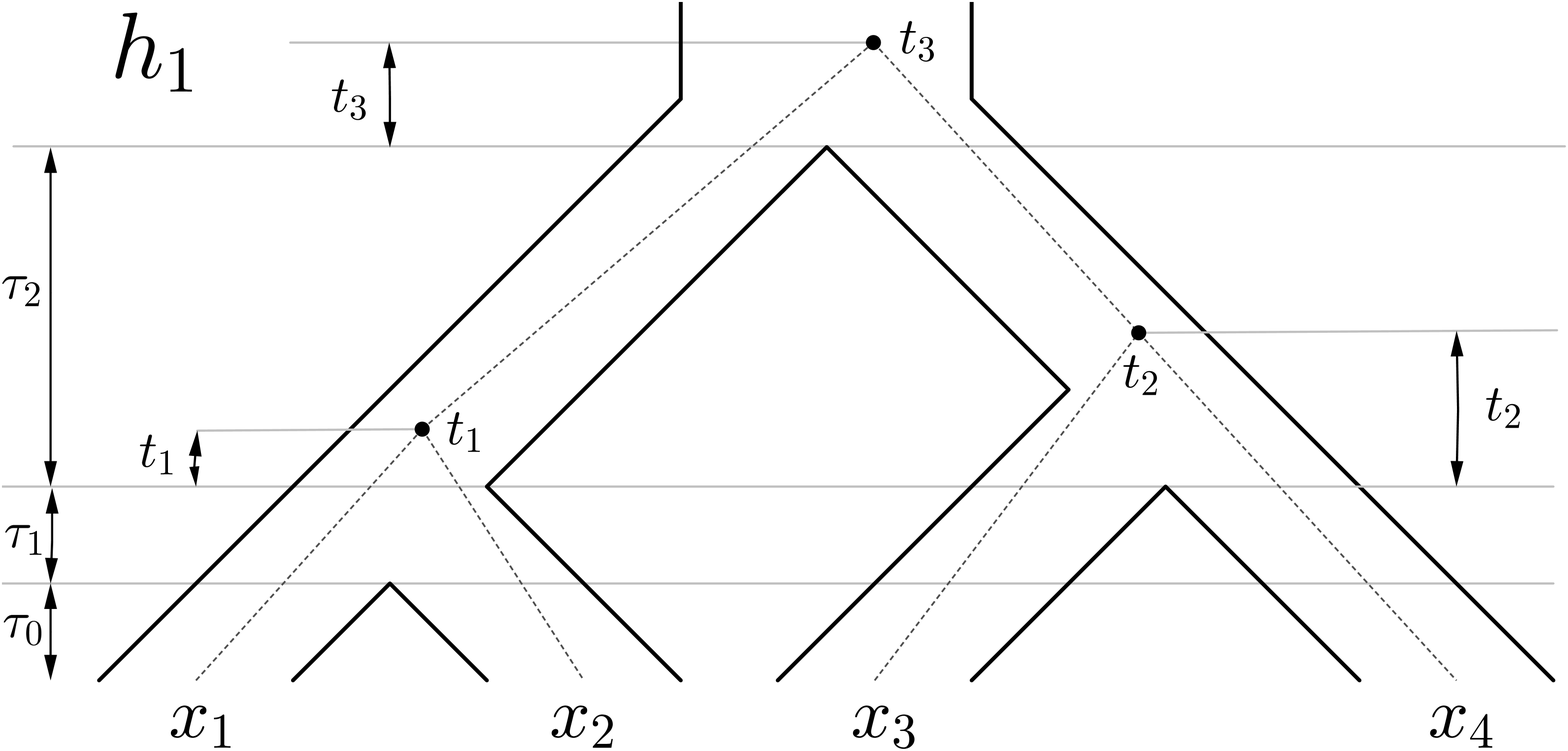} 
    \includegraphics[scale=0.12]{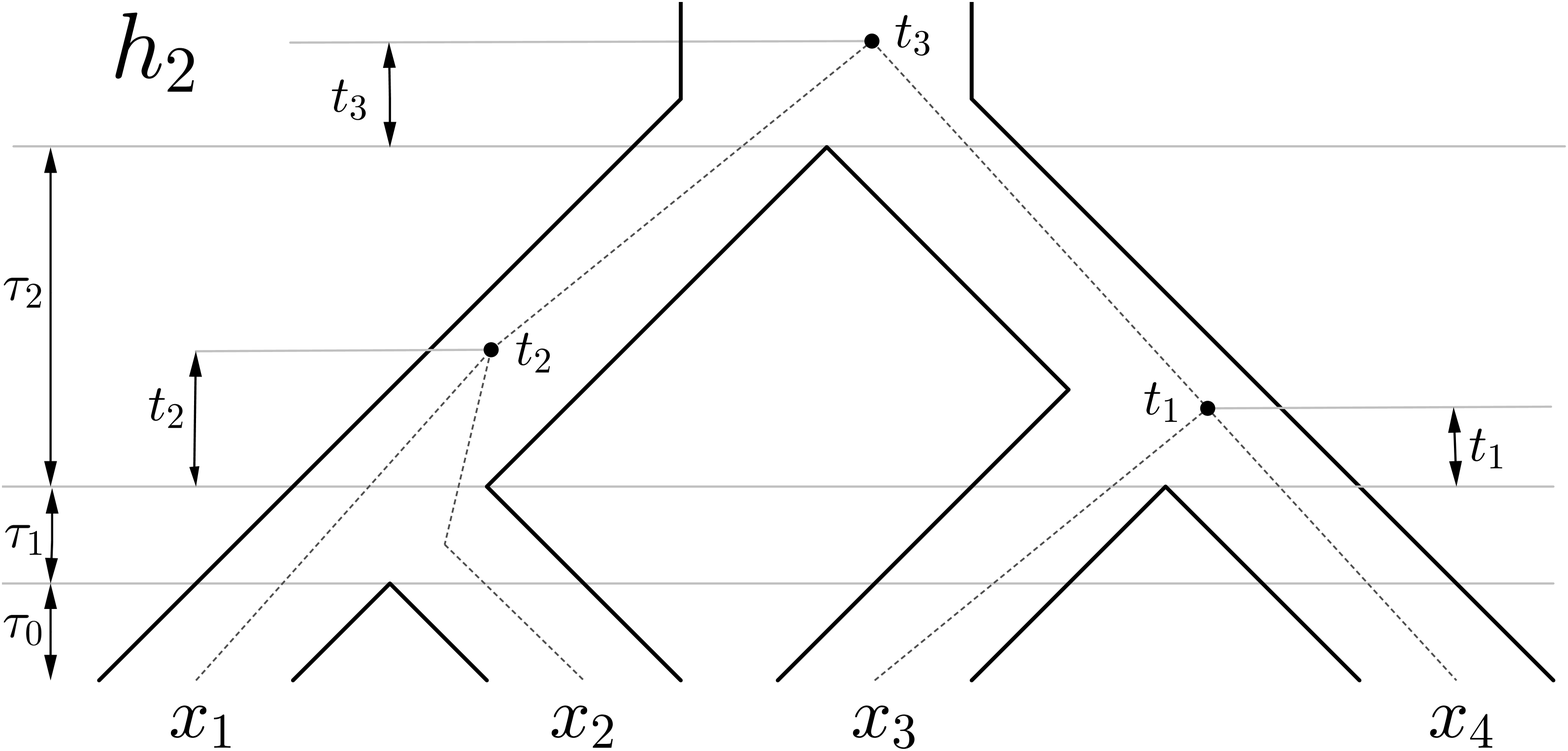}
    \caption{Two distinct histories $h_1$ and $h_2$ evolving within the same species tree. Note that $h_1$ and $h_2$ have the same topology and their coalescent events occur in the same time intervals of the species tree, but the rankings of the interior vertices of $h_1$, respectively $h_2$, induced by vectors of coalescent times $\boldsymbol{t}_{h_1}$ and $\boldsymbol{t}_{h_2}$ differ. Moreover, both $h_1$ and $h_2$ are such that there are two coalescence events in interval $\tau_2$. By subdividing this interval into three subintervals, namely $[0,t_1], [t_1,t_2]$, and $[t_2, \tau_2]$, and multiplying densities contributed by subintervals, we e.g. obtain $f_{h_1}(h_1,\boldsymbol{t}_{h_1}) = e^{- \tau_1} \cdot e^{-t_1} \cdot e^{-t_1} \cdot e^{-(t_2-t_1)} \cdot e^{-t_3} = e^{-\tau_1-t_1-t_2-t_3}$.}
    \label{Fig_h1h2}
\end{figure}

For a given species tree $\cS=(S,\boldsymbol{\tau})$ and a particular (ranked gene tree) history $h \in \cH(\cS)$ with associated vector of coalescent times $\boldsymbol{t}_h$, the probability of observing $h$ given $\cS$ under the multispecies coalescent model can be found by integrating over the \emph{joint density of coalescence times}, which we denote by $f_h(h,\boldsymbol{t}_h)$. We refer the reader to \citet{Degnan2012,Degnan2012a} for a careful derivation of this density. Essentially, it is a product of exponential terms arising from the occurrence of exponential distributions as waiting times to coalescent events. More precisely, for a sample of $k$ lineages from a single population, the waiting time to coalescence is exponentially distributed with rate $\binom{k}{2}$. The probability that there is no coalescence for a sample of $k$ lineages in a time interval of length $\tau$ is then $e^{- \binom{k}{2} \tau}$. Based on this, the density $f_h(h,\boldsymbol{t}_h)$ is obtained by multiplying exponential probabilities across the branches and time intervals of $\cS$. If there are $m_i \geq 0$ coalescences in an interval $\tau_i$ of $\cS$, the interval is subdivided into $m_i+1$ subintervals and densities contributed by the subintervals are multiplied (again, see \citet{Degnan2012,Degnan2012a} for a more rigorous definition; see also  Figure~\ref{Fig_h1h2} for an example). 

\noindent As an example, consider the species tree $\cS=(S,\boldsymbol{\tau})$ and the gene tree history $h \in \cH(\cS)$ depicted in Figure \ref{Fig_5Leaves_Notation}. First consider the ``left'' subtree of $\cS$ containing leaves $x_1$ and $x_2$. Here, $x_1$ and $x_2$ fail to coalesce in an interval of length $\tau_1$, contributing a factor of $e^{-\tau_1}$ to the density $f_h(h,\boldsymbol{t}_h)$. These lineages then coalesce at time $t_1$ within the next interval, contributing a factor of $e^{-t_1}$ to the density. Similarly, for the ``right'' subtree of $\cS$, lineages $x_4$ and $x_5$ fail to coalesce in an interval of length $\tau_2$ yielding a factor of $e^{-\tau_2}$. The interval $\tau_3$ then contains a coalescent event at time $t_2$, which contributes the term $e^{-3t_2}$ (since three lineages were available to coalesce). The remaining two lineages fail to coalesce for the remainder of this interval, contributing a term of $e^{-(\tau_3 - t_2)}$. Finally, considering $\tau_4$, the most recent coalescent event occurs at time $t_3$, thus contributing a factor of $e^{-3t_3}$ to the density, while the final coalescent event occurs $t_4-t_3$ time units later, contributing a term of $e^{-(t_4-t_3)}$. Thus, in this case, we have
$$f_h(h,\boldsymbol{t}_h) = e^{-\tau_1}e^{-t_1}  e^{-\tau_2}  e^{-3 t_2} e^{-(\tau_3-t_2)} e^{-3t_3} e^{-(t_4-t_3)} = e^{-\tau_1 - \tau_2 - \tau_3  - t_1 - 2t_2 - 2t_3 - t_4}.$$
Finally, the limits of integration are determined by the vector of coalescent times $\boldsymbol{t}_h$ and the vector of interval lengths $\boldsymbol{\tau}$. In the example discussed above, $t_1$ can range from $0$ to $\tau_2$, $t_2$ can range from $0$ to $\tau_3$, $t_3$ can range from $0$ to $t_4$, and $t_4$ can range from 0 to infinity. This leads to
\begin{align*}
 \bP(h \vert \cS)
    &= \int f_h(h,\boldsymbol{t}_h) \, d\boldsymbol{t}_h \\
     &= \int\limits_{t_4=0}^{\infty} \, \int\limits_{t_3=0}^{t_4} \, \int\limits_{t_2=0}^{\tau_3} \, \int\limits_{t_1=0}^{\tau_2} \, e^{-\tau_1 - \tau_2 - \tau_3  - t_1 - 2t_2 - 2t_3 - t_4} \, dt_1 \, dt_2 \, dt_3 \, dt_4  \\
     &= \frac{1}{6} e^{- \tau_1 - 2 \tau_2 - 3 \tau_3} -  \frac{1}{6} e^{- \tau_1 - \tau_2 - 3 \tau_3} -  \frac{1}{6} e^{- \tau_1 - 2 \tau_2 - \tau_3} +  \frac{1}{6} e^{- \tau_1 -  \tau_2 - \tau_3}. 
\end{align*}

\paragraph*{Exchangeability.}
Considering the coalescent process for some species tree $\cS$ on $X$ and a time interval $\tau$, we call two leaves, $x_i$ and $x_j$ say, \emph{exchangeable} (with respect to $\tau$) if they ``enter'' interval $\tau$ as single lineages. In particular, from this point onward (back in time), $x_i$ and $x_j$ are ``indistinguishable'' and every history $h$ with the property that neither $x_i$ nor $x_j$ coalesce before $\tau$ either contains $[x_i,x_j]$ as a cherry, or if such a history $h'$ does not contain the cherry $[x_i,x_j]$, there exists a history $h''$ which only differs from $h'$ by a permutation of $x_i$ and $x_j$ and such that $\boldsymbol{t}_{h'}=\boldsymbol{t}_{h''}$ and $f_{h'}(h',\boldsymbol{t}_{h'}) = f_{h''}(h'',\boldsymbol{t}_{h''}) $.

As an example, for the species tree $\cS$ shown in Figure \ref{Fig_5Leaves_Notation}, leaves $x_3$, $x_4$, and $x_5$ are exchangeable with respect to interval $\tau_3$ if they enter this interval as single lineages, as is the case for the history depicted. The history $h$ depicted contains $[x_3,x_4]$ as a cherry, but there also exist histories, $h'$ and $h''$ say, where $x_3$ and $x_4$ do not form a cherry, but such that $h'$ and $h''$ only differ by a permutation of $x_3$ and $x_4$, and such that $\boldsymbol{t}_{h'}=\boldsymbol{t}_{h''}$ and $f_{h'}(h',\boldsymbol{t}_{h'}) = f_{h''}(h'',\boldsymbol{t}_{h''}) $ (for instance, $h'$ could be the history identical to $h$ except that the coalescence at time $t_2$ involves $x_3$ and $x_5$, in which case the ``matching'' history $h''$ would be the history identical to $h$ except that the coalescence at time $t_2$ involves $x_4$ and $x_5$).

This exchangeability property and the idea of pairing matching histories will be crucial for many proofs and we will sometimes also consider exchangeability of subtrees (rather than single leaves) defined in an analogous way.

\paragraph*{Lowest common ancestors and subtrees.} 
Given a species tree $\cS$ (a history $h$) with root $\rho$ and two leaves $x_i,x_j \in X$, we refer to the last vertex in $\cS$ ($h$) that lies on both the path from $\rho$ to $x_i$ as well as the path from $\rho$ to $x_j$ as the \emph{lowest common ancestor} of $x_i$ and $x_j$ and denote it as $lca(x_i,x_j)$. Moreover, given a vertex $v$ of $\cS$ ($h$), we refer to the subtree of $\cS$ ($h$) rooted at vertex $v$ as a \emph{pendant subtree} of $\cS$ ($h$).

\paragraph*{Caterpillar and pseudocaterpillar trees.} 
Two particular tree topologies that will be relevant in the following are the \emph{caterpillar tree} and the \emph{pseudocaterpillar tree} (Figure \ref{Fig_CatPseudoCat}). First, recall that a pair of leaves, say $x_i$ and $x_j$, of a phylogenetic $X$-tree is called a cherry, denoted $[x_i,x_j]$, if $x_i$ and $x_j$ share a common parent. Now, let $n \geq 2$. Then, a phylogenetic $X$-tree on $X=\{x_1, \ldots, x_n\}$ is called a \emph{caterpillar tree} if and only if it has precisely one cherry. Moreover, if $n \geq 4$, a phylogenetic $X$-tree on $X=\{x_1, \ldots, x_n\}$ is called a \emph{pseudocaterpillar tree} (adapted from \cite{Degnan2012}\footnote{Note that \cite{Degnan2012} defined pseudocaterpillar trees for $n \geq 5$.}) if it has precisely two cherries and those two cherries form a pendant subtree of size four. 

\begin{figure}[htbp]
    \centering
    \includegraphics[scale=0.195]{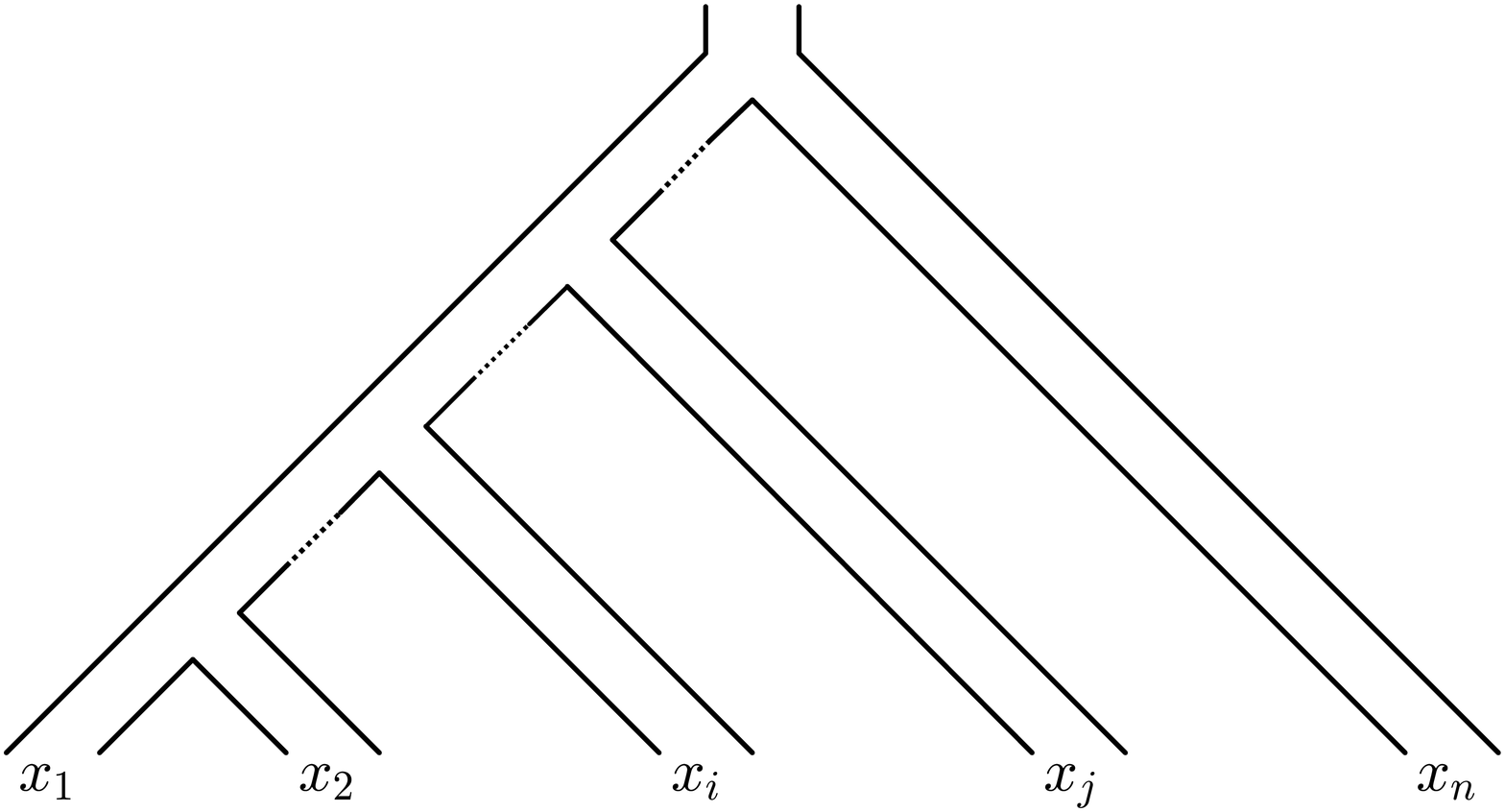}
    \includegraphics[scale=0.115]{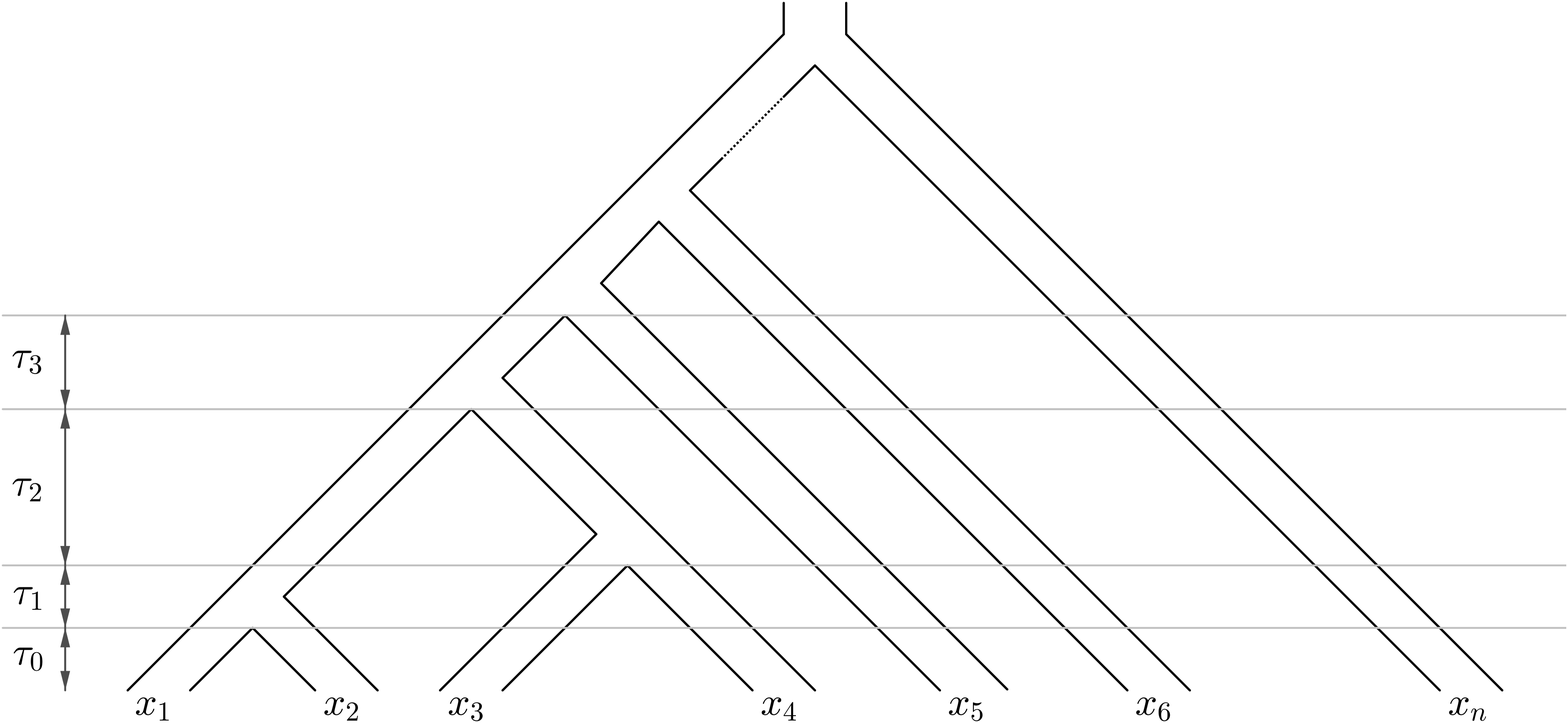}
    \caption{A caterpillar tree (top) and a pseudocaterpillar tree (bottom).}
    \label{Fig_CatPseudoCat}
\end{figure}

\paragraph*{The Fair Proportion index.}
Let $S$ be a rooted phylogenetic $X$-tree (not necessarily binary) with root $\rho$ and leaf set $X=\{x_1, \ldots, x_n\}$, where each edge $e$ is assigned a non-negative length $l(e) \in \mathbb{R}_{\geq 0}$. Then, the \emph{Fair Proportion (FP) index} \citep{Isaac2007,Redding2003} for leaf $x_i \in X$ is defined as 
\begin{align}
    FP_S(x_i) &= \sum\limits_{e \in P(S; \rho, x_i)} \frac{l(e)}{n(e)},
    \label{Def_FP}
\end{align}
where $P(S; \rho, x_i)$ denotes the path in $S$ from the root $\rho$ to leaf $x_i$ and $n(e)$ is the number of leaves descended from $e$. Note that $\sum\limits_{x_i \in X} FP_S(x_i) = \sum\limits_{e \in S} l(e),$
i.e., the FP index apportions the total sum of branch lengths of $S$ (also referred to as the `phylogenetic diversity' of $X$ \citep{Faith1992}) among the taxa in $X$. Note that we can calculate the FP index both for a given species tree $\cS$ as well as for all histories $h \in \cH(\cS)$ (where in case of the latter, the FP index is expressed in terms of both speciation and coalescent times). We denote these variants as $FP_{\cS}(x_i)$ and $FP_{h}(x_i)$, respectively. As an example, for the species tree $\cS=(S, \boldsymbol{\tau})$ and the history $h$ with its vector $\boldsymbol{t}_h$ of coalescent times shown in Figure \ref{Fig_5Leaves_Notation}, and the leaf $x_3$, we have
\begin{align*}
    FP_{\cS}(x_3) &= \frac{\tau_3}{3} + \frac{\tau_0+\tau_1+\tau_2}{1} = \tau_0 + \tau_1 + \tau_2 + \frac{\tau_3}{3} \\
    FP_{h}(x_3) &= \frac{t_4-t_3}{3} + \frac{\tau_3-t_2+t_3}{2} + \frac{\tau_0+\tau_1+\tau_2+t_2}{1} \\
    &= \tau_0 + \tau_1 + \tau_2 + \frac{\tau_3}{2} + \frac{t_2}{2} + \frac{t_3}{6} + \frac{t_4}{3}.
\end{align*}
Note that when the FP index is calculated for a particular history $h$ evolving within a species tree $\cS$, it will depend on some or all of the coalescent times $t_1, \ldots, t_{n-1}$. As there are infinitely many choices for the values $t_1, \ldots, t_{n-1}$ that satisfy the constraints of history $h$ evolving within $\cS$, in the following we define the expected FP index for leaf $x_i$ on history $h$ given the species tree $\cS$ by integrating over all possible times at which the coalescent events consistent with $h$ may occur. More precisely, we consider
\begin{align}
    \bE[FP_h(x_i)\vert\cS] &= \int FP_h(x_i) \, f_h(h, \boldsymbol{t}_h) \, d\boldsymbol{t}_h, \label{FP_history}
\end{align}
where $f_h(h, \boldsymbol{t}_h)$ denotes the probability density of history $h$ and $\boldsymbol{t}_h$ is the vector of coalescent times for $h$, and as before, the limits of integration are determined by the particular locations of the coalescent events of $h$ within the species tree. 
Based on this, we define the \emph{expected FP index of leaf $x_i$ on species tree $\cS$} as the expected FP index across all gene tree histories $h$ associated with $\cS$. More formally,
\begin{align}
    \bE[FP(x_i)\vert \cS] &= \sum\limits_{h \in \cH(\cS)}   \bE[FP_h(x_i)\vert\cS]
    =  \sum\limits_{h \in \cH(\cS)} \, \int FP_h(x_i) \, f_h(h, \boldsymbol{t}_h) \, d\boldsymbol{t}_h. \label{Expected_FP}
\end{align}

\paragraph*{Rankings.}
The FP index induces a natural ranking of the elements in $X$, where taxa are ordered according to their FP index. Here, a ranking $r$ is an assignment of ranking numbers to the elements of $X$, where for any pair of taxa $x_i, x_j \in X$, $x_i$ either receives a higher or lower ranking number than $x_j$, or the ranking numbers are equal (we then call $x_i$ and $x_j$ tied). We say that a function $f:X \rightarrow \mathbb{R}$ induces a ranking $r_f(X)$ if the ranking number of $x_i$ in $r_f$ is smaller than the ranking number of $x_j$ precisely if $f(x_i) > f(x_j)$. If $f(x_i)=f(x_j)$ for some $x_i \neq x_j$, $x_i$ and $x_j$ receive the same ranking number. In this case, we use the so-called `standard competition ranking' (also referred to as `1224' ranking), where tied elements receive the same ranking number, say $\widehat{r}$, and the subsequent ranking number is $\widetilde{r}=\widehat{r}+t-1$, where $t$ denotes the number of tied elements with ranking number $\widehat{r}$. In the following, we will consider the ranking of $X$ induced by the FP index on the species tree, denoted as $r_{(\cS,\text{FP})}(X)$, and the ranking of $X$ induced by the expected FP index, denoted as $r_{(\cS,\bE(\text{FP}))}(X)$, and analyze the relationship between these two rankings.

\section{Results} \label{Sec_Results}
We compare the rankings induced by the FP index on a given species tree $\cS$ with the ranking induced by the expected FP index on $\cS$. We show that under certain circumstances the rankings always coincide, while there also exist cases for which the rankings differ.

\subsection{Circumstances under which the FP index on the species tree and the expected FP index induce the same ranking}

In this section, we characterize certain tree topologies for which the rankings $r_{(\cS,\textup{FP})}(X)$ and $r_{(\cS,\bE(\textup{FP}))}(X)$ are guaranteed to coincide. 
However, we begin by showing that if a species tree $\cS$ on $X=\{x_1, \ldots, x_n\}$ contains a pendant subtree $\cS'$ on a subset $X' \subseteq X$, $X' = \{x_1, \ldots, x_m\}$ with $3 \leq m \leq n$ say, and such that the two maximal pendant subtrees of $\cS'$ consist of $m-1$ leaves labeled $x_1, \ldots, x_{m-1}$ and $1$ leaf labeled $x_m$, respectively, then among these $m$ leaves, leaf $x_m$ has both the highest FP index on $\cS$ as well as the highest expected FP index.

\begin{proposition} \label{Prop_ntaxon}
Let $n \geq 3$ and let $\cS=(S,\boldsymbol{\tau})$ be a species tree on $X=\{x_1, \ldots, x_n\}$ containing a pendant subtree $\cS'$ on $3 \leq m \leq n$ leaves, which in turn consists of a maximal pendant subtree on $m-1$ leaves, $x_1, \ldots, x_{m-1}$ say, and a maximal pendant subtree on one leaf, $x_m$ say, (Figure \ref{Fig_Prop_ntaxon}). Then,
\begin{enumerate}
\item[\textup{(i)}] $FP_{\cS}(x_m) > FP_{\cS}(x_i)$ for all $x_i \in \{x_1, \ldots, x_{m-1}\}$. 
\item[\textup{(ii)}] $\bE[FP(x_m)\vert\cS] > \bE[FP(x_i)\vert\cS]$ for all $x_i \in \{x_1, \ldots, x_{m-1}\}$. 
\end{enumerate}
\end{proposition}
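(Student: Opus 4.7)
\emph{Part (i).} Let $v$ denote the root of $\cS'$ and $h_v$ the distance in $\cS$ from $v$ to any leaf below it. The root-to-$v$ portion of the paths to $x_m$ and to any $x_i\in\{x_1,\ldots,x_{m-1}\}$ contributes identically to $FP_{\cS}(x_m)$ and $FP_{\cS}(x_i)$, so it suffices to compare the $v$-to-leaf portions. The path from $v$ to $x_m$ is a single edge of length $h_v$ with $n(e)=1$, contributing exactly $h_v$; by ultrametricity the path from $v$ to $x_i$ also has total length $h_v$, but it traverses the edge from $v$ to the root of the pendant subtree on $\{x_1,\ldots,x_{m-1}\}$ on which $n(e)=m-1\geq 2$, so its contribution is strictly less than $h_v$. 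Hence $FP_{\cS}(x_m)>FP_{\cS}(x_i)$.

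\emph{Part (ii).} For each history $h\in\cH(\cS)$ I split the FP values at time $h_v$ by subdividing any edge of $h$ that crosses time $h_v$: write $FP_h(x)=FP_h^{\mathrm{b}}(x)+FP_h^{\mathrm{a}}(x)$, where $FP_h^{\mathrm{b}}(x)$ collects contributions from (subintervals of) edges of $h$ lying in $[0,h_v]$ along the root-to-$x$ path, and $FP_h^{\mathrm{a}}(x)$ the rest. Since $x_m$ is the only leaf on its side of $v$, the $x_m$-lineage of $h$ is alone throughout $[0,h_v]$ in every history, so $FP_h^{\mathrm{b}}(x_m)=h_v$ deterministically. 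For any $x_i\in\{x_1,\ldots,x_{m-1}\}$, the path from $x_i$ to time $h_v$ also has total length $h_v$ and satisfies $n(e)\geq 1$ everywhere, whence $FP_h^{\mathrm{b}}(x_i)\leq h_v$, with strict inequality precisely when the lineage $\ell_{x_i}(h)$ containing $x_i$ at time $h_v$ has size at least two. Because the leaves of the $(m-1)$-leaf subtree share a common $\cS$-population in the positive-length interval between its root and $v$, the latter event has positive probability, yielding $\bE[FP^{\mathrm{b}}(x_m)-FP^{\mathrm{b}}(x_i)\mid\cS]>0$.

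For the ``above'' part I condition on the state $S$ of lineages present at time $h_v$ and invoke exchangeability. All $\cS'$-descending lineages occupy the same $\cS$-population from time $h_v$ onwards, so the involution $\sigma$ that swaps the labels $\{x_m\}$ and $\ell_{x_i}$ in the above-$h_v$ coalescent trajectory preserves the trajectory density (waiting times depend only on the number of lineages, and pair selection is uniform). Moreover $FP_h^{\mathrm{a}}(y)$ takes the common value $A(\ell):=\sum l(e)/n(e)$ over above-$h_v$ edges ancestral to $\ell$ for every $y$ in the $h_v$-lineage $\ell$, so it suffices to compare $\bE[A(\{x_m\})\mid S]$ and $\bE[A(\ell_{x_i})\mid S]$. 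If $|\ell_{x_i}|=1$, the swap leaves every $n(e)$ unchanged, giving these expectations equal. If $|\ell_{x_i}|=k\geq 2$, then on any fixed edge $e$ of $h$ the swap $\sigma$ shifts $n(e)$ by $+(k-1)$ when $e$ is ancestral in the trajectory only to $\{x_m\}$, by $-(k-1)$ when $e$ is ancestral only to $\ell_{x_i}$, and leaves $n(e)$ unchanged otherwise (in particular above the $h$-lca of the two lineages, where both are descendants). Pairing $T$ with $\sigma(T)$ and summing then gives the pointwise strict inequality $A(\{x_m\})_T+A(\{x_m\})_{\sigma(T)}>A(\ell_{x_i})_T+A(\ell_{x_i})_{\sigma(T)}$, and hence $\bE[A(\{x_m\})\mid S]>\bE[A(\ell_{x_i})\mid S]$. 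Averaging over $S$ gives $\bE[FP^{\mathrm{a}}(x_m)-FP^{\mathrm{a}}(x_i)\mid\cS]\geq 0$, which combined with the strictly positive ``below'' contribution proves (ii). The main obstacle is this swap calculation for the above-$h_v$ contribution: one has to partition the edges of $h$ by which of $\{x_m\}$ and $\ell_{x_i}$ they are ancestral to and verify that the $n(e)$-shifts cancel on the shared portion above the $h$-lca while producing a strictly positive imbalance on the two divergent portions.
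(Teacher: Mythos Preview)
Your proof is correct and follows essentially the same strategy as the paper: both exploit the exchangeability of the singleton lineage $\{x_m\}$ with the subtree (your $\ell_{x_i}$, the paper's $T_{x_i}$) containing $x_i$ once they share a population above the root of $\cS'$, pairing histories via the involution that swaps their roles, and obtaining strictness from the positive-probability event that $x_i$ has already coalesced by that time. Your explicit below/above decomposition of $FP_h$ at height $h_v$, with all of the strict inequality isolated in the ``below'' part, is a clean reorganization of the paper's argument (which instead compares along the full path from the gene-tree $lca(x_i,x_m)$ to each leaf and separates into a cherry case and a matched-pair case); the swap calculation you flag as the main obstacle goes through exactly as you outline, since the paired difference reduces to a sum of manifestly nonnegative terms of the form $l(e)\bigl(\tfrac{1}{n_T(e)}-\tfrac{1}{n_T(e)+k-1}\bigr)$ and $l(e)\bigl(\tfrac{1}{n_T(e)-k+1}-\tfrac{1}{n_T(e)}\bigr)$.
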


\begin{figure}[htbp]
\centering
\includegraphics[scale=0.225]{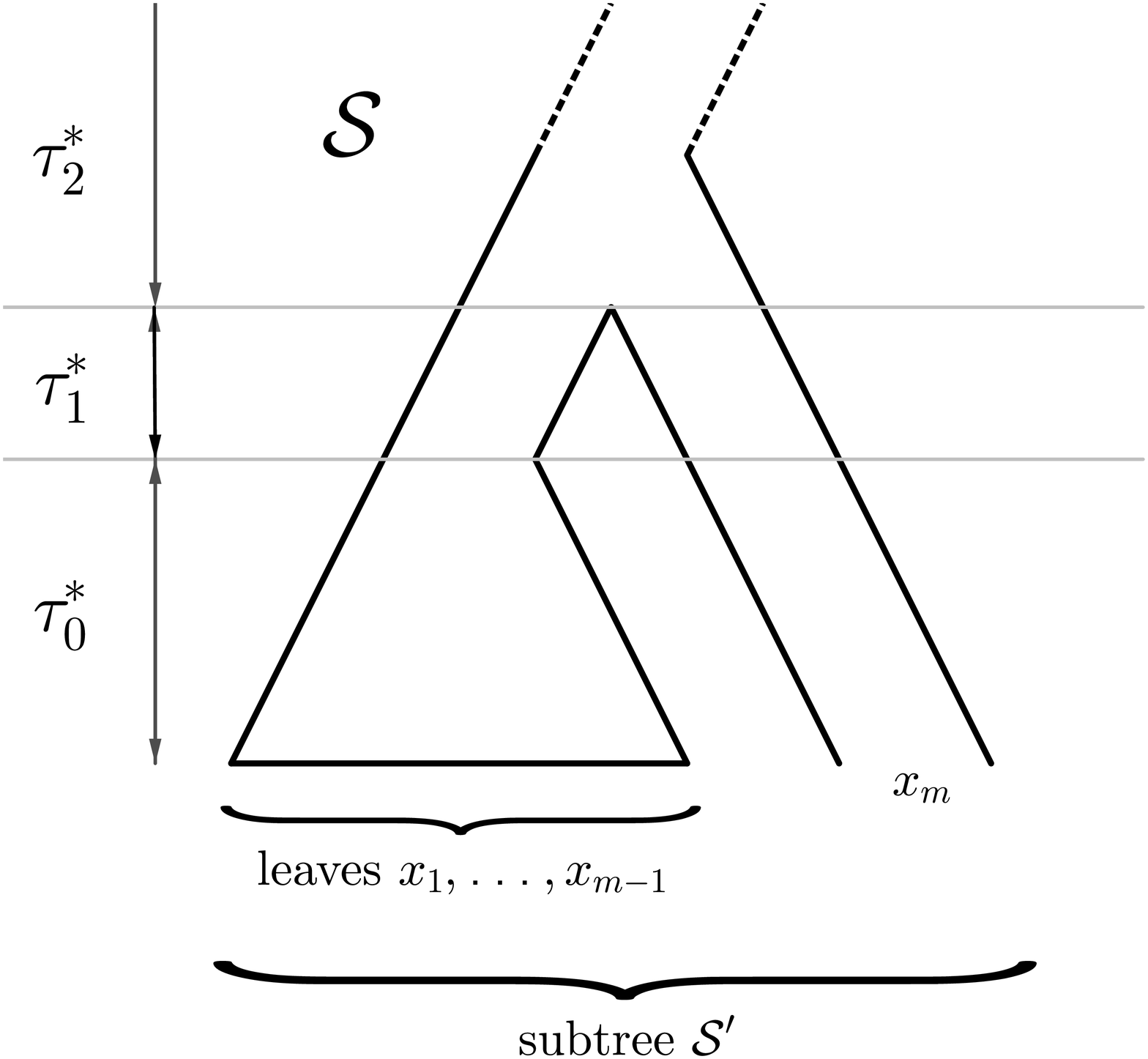}
\caption{Species tree $\cS$ as described in Proposition \ref{Prop_ntaxon}.}
\label{Fig_Prop_ntaxon}
\end{figure}

\begin{proof}\leavevmode 
\begin{enumerate}
\item[\textup{(i)}] 
First, note that all edges of $\cS$ above interval $\tau_1^\ast$ (see Figure \ref{Fig_Prop_ntaxon}) contribute the same amount, $\Delta_{FP}$ say, to $FP_{\cS}(x_m)$ and  $FP_{\cS}(x_i)$ for $x_i \in \{x_1, \ldots, x_{m-1}\}$.
In particular, for taxon $x_m$, we have $FP_{\cS}(x_m)= \tau_0^\ast + \tau_1^\ast + \Delta_{FP}$, whereas for $x_i \in \{x_1, \ldots, x_{m-1}\}$, we have $FP_{\cS}(x_i) \leq \tau_0^\ast+ \frac{\tau_1^\ast}{m-1} + \Delta_{FP} < FP_{\cS}(x_m)$ (since the subtree of $\cS'$ containing $x_i$ has $m-1 \geq 2$ leaves). This completes the first part of the proof.
 
\item[\textup{(ii)}] 
Let $x_i \in \{x_1, \ldots, x_{m-1}\}$. In order to show that $\bE[FP(x_m)\vert\cS] > \bE[FP(x_i)\vert\cS]$, we show that $$\bE[FP(x_m)-FP(x_i) \vert \cS] = \sum\limits_{h \in \cH(\cS)} \, \int \left[ FP_{h}(x_m) - FP_{h}(x_i) \right] f_h(h, \boldsymbol{t}_h) \, d\boldsymbol{t}_h  > 0.$$
Note that here (as everywhere in this manuscript) the limits of integration are determined by the particular locations of coalescent events of each history within the species tree.
We now analyze the set of histories $\mathcal{H}(\mathcal{S})$ more in-depth. In particular, we analyze them with respect to the interval $\tau_2^\ast$ in $\mathcal{S}$. Note that $x_i$ enters interval $\tau_2^\ast$ as part of some subtree, $T_{x_i}$ say, where the size of $T_{x_i}$ can range from 1 to $m-1$ leaves, whereas $x_m$ enters interval $\tau_2^\ast$ as a subtree of size 1, $T_{x_m}$ say.

However, subtrees $T_{x_i}$ and $T_{x_m}$ are exchangeable in the following sense: Each history $h \in \cH(\cS)$ either (a) contains $T_{x_i}$ and $T_{x_m}$ as a ``subtree cherry'' (i.e., the roots of $T_{x_i}$ and $T_{x_m}$ have a common parent and the two trees form a pendant subtree $(T_{x_i},T_{x_m})$ of $h$); or (b) if a history $h'$ does not contain $(T_{x_i},T_{x_m})$ as a pendant subtree, there exists a history $h''$ identical to $h'$ except that the roles of $T_{x_i}$ and $T_{x_m}$ are exchanged and such that $\boldsymbol{t}_{h'} = \boldsymbol{t}_{h''}$ and $f_{h'}(h',\boldsymbol{t}_{h'})=f_{h''}(h'',\boldsymbol{t}_{h''})$ (Figure \ref{Fig_Prop1}).
In both cases, let $lca(x_i,x_m)$ denote the lowest common ancestor of $x_i$ and $x_m$ in the corresponding gene tree history, and consider the FP indices of $x_i$ and $x_m$.

\begin{enumerate}[(a)]
    \item If $h$ is such that $T_{x_i}$ and $T_{x_m}$ form a pendant subtree of $h$ with root $lca(x_i,x_m)$, all edges of $h$ above $lca(x_i,x_m)$ (i.e., all edges in the path from the root of $h$ to $lca(x_i,x_m)$) contribute the same amount to $FP_h(x_i)$ and $FP_h(x_m)$. However, considering the paths from $lca(x_i,x_m)$ to $x_i$ and $x_m$ of length $\tau_0^\ast+\tau_1^\ast+t_{lca}$, where $t_{lca}$ denotes the time of the coalescence event that leads to $lca(x_i,x_m)$, the contribution of edges in the path from $lca(x_i,x_m)$ to $x_m$ to $FP_h(x_m)$ is precisely $\tau_0^\ast+\tau_1^\ast+t_{lca}$, whereas the contribution of edges in the path from $lca(x_i,x_m)$ to $x_i$ to $FP_h(x_i)$ is less or equal to $\tau_0^\ast+\tau_1^\ast+t_{lca}$ (where the inequality is strict if $T_{x_i}$ contains at least two leaves). In total, this implies that we have $FP_h(x_m) \geq FP_h(x_i)$.
    \item Now, if $h'$ and $h''$ are two histories that differ only by a permutation of the roles of $T_{x_i}$ and $T_{x_m}$, we first note that the lowest common ancestor of $x_i$ and $x_m$ corresponds to the same vertex $lca(x_i,x_m)$ in $h'$ and $h''$ (see Figure \ref{Fig_Prop1}(b)). Moreover, as in case (a), all edges above $lca(x_i,x_m)$ contribute the same amount towards $FP_{h'}(x_m)$, $FP_{h'}(x_i)$, $FP_{h''}(x_m)$, and $FP_{h''}(x_i)$. 
    Now, as $T_{x_i}$ contains at least one leaf, whereas $T_{x_m}$ contains precisely one leaf, referring to the notation in Figure \ref{Fig_Prop1}, we can conclude that the contribution of edges in the path from $lca(x_i,x_m)$ to $u$ in $h''$ towards $FP_{h''}(x_m)$ is greater than or equal to the contribution of the corresponding edges in the path from $lca(x_i,x_m)$ to $u$ in $h'$ towards $FP_{h'}(x_i)$. Moreover, similar to case (a), the contribution of edges in the path from $u$ to $x_m$ in $h''$ towards $FP_{h''}(x_m)$ is greater than or equal to the contribution of edges in the path from $u$ to $x_i$ in $h'$ towards $FP_{h'}(x_i)$. In summary, $FP_{h''}(x_m) \geq FP_{h'}(x_i)$. An analogous argument shows that $FP_{h'}(x_m) \geq FP_{h''}(x_i)$. In particular note that the inequalities are strict if $T_{x_i}$ contains at least two leaves.
\end{enumerate}

Thus, we can partition the set of histories $\cH(\cS)$ into the subset of histories of type (a), in which $T_{x_i}$ and $T_{x_m}$ form a pendant subtree and we have $FP_h(x_m) \geq FP_h(x_i)$, and thus
$$ \int \underbrace{\left( FP_h(x_m)- FP_h(x_i) \right)}_{\geq 0} \, f_h(h, \boldsymbol{t}_h) \, d\boldsymbol{t}_{h} \geq  0,$$
and into the subset of histories of type (b), i.e., into pairs of ``matching'' histories $(h',h'')$, where, with the reasoning from above, in particular, recalling that $\boldsymbol{t}_{h'}=\boldsymbol{t}_{h''}$ and $f_{h'}(h',\boldsymbol{t}_{h'}) = f_{h''}(h'',\boldsymbol{t}_{h''})$, we have for each such pair $(h',h'')$ that
$$ \int\limits \left[ \left(\underbrace{FP_{h'}(x_m)-FP_{h''}(x_i)}_{\geq 0}\right)+   \left(\underbrace{FP_{h''}(x_m)-FP_{h'}(x_i)}_{\geq 0}\right) \right] f_{h'}(h',\boldsymbol{t}_{h'}) d\boldsymbol{t}_{h'}  \geq 0.$$

Now, as in both cases the inequalities are strict for all histories $h$ such that $T_{x_i}$ contains at least two leaves (and such histories always exist), in summary, we have
$$\bE[FP(x_m)-FP(x_i) \vert \cS] = \sum\limits_{h \in \cH(\cS)} \, \int  \left[ FP_{h}(x_m) - FP_{h}(x_i) \right] f_h(h, \boldsymbol{t}_h) \, d\boldsymbol{t}_h  > 0$$
as claimed. This completes the proof.
\end{enumerate}
\end{proof}

\begin{figure}[htbp]
     \centering
     \begin{subfigure}{1\textwidth}
         \centering
         \includegraphics[scale=0.175]{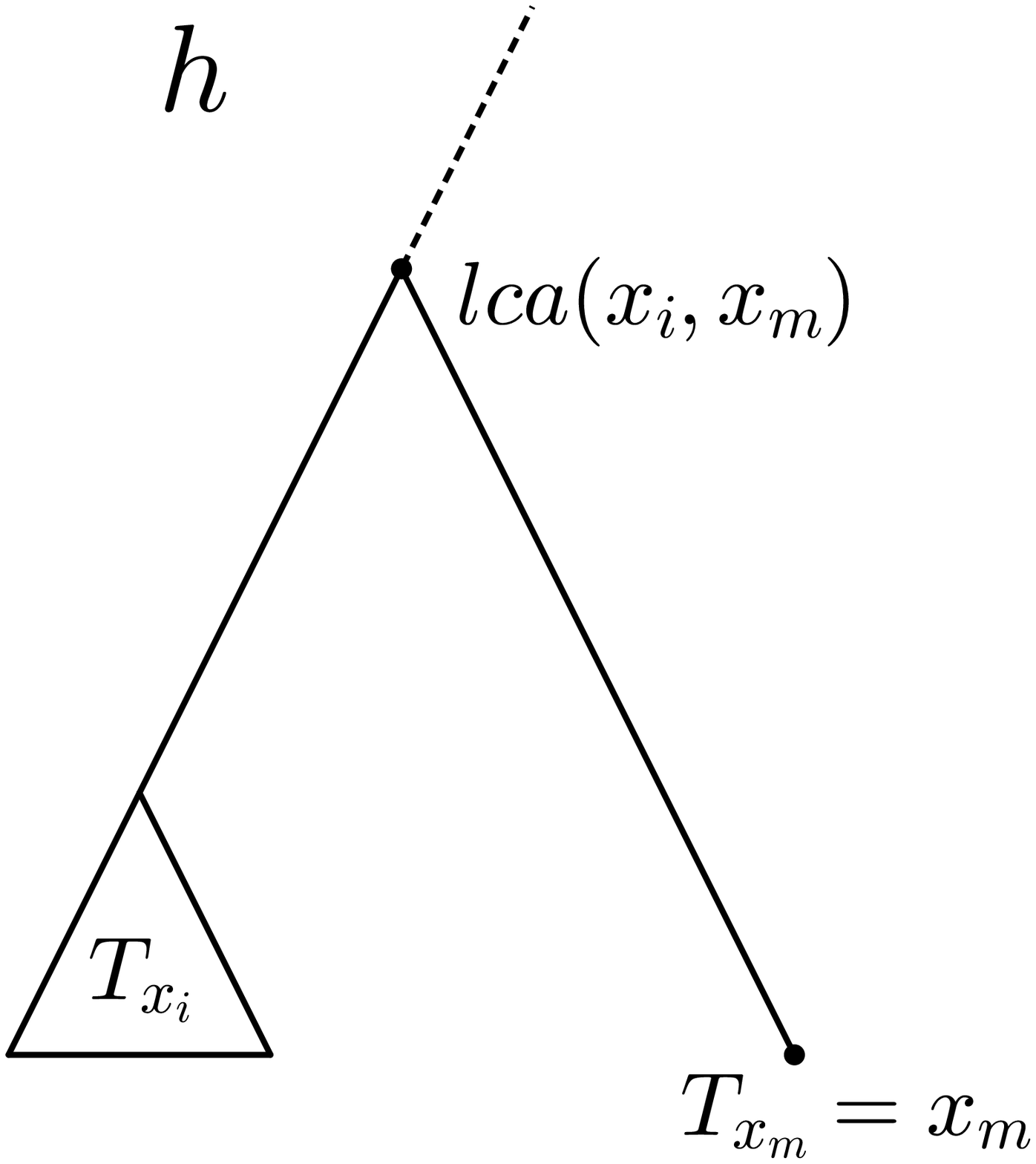}
         \caption{\phantom{.}}
     \end{subfigure}
     \newline
     \begin{subfigure}{1\textwidth}
         \centering
         \includegraphics[scale=0.175]{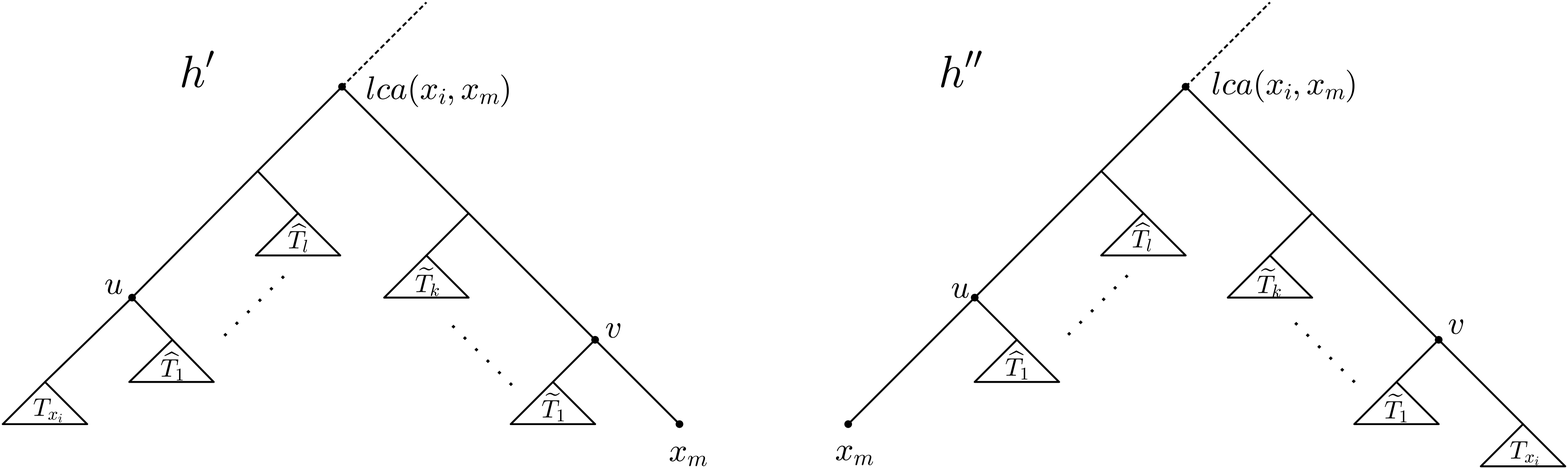}
         \caption{\phantom{.}}
     \end{subfigure}
      \caption{Histories in $\cH(\cS)$ of type (a) and (b) as described in the proof of Proposition \ref{Prop_ntaxon}.}
    \label{Fig_Prop1}
\end{figure}

We now show that the ranking induced by the FP index on a species tree $\cS$ coincides with the ranking induced by the expected FP index for $\cS$ if $\cS$ is a caterpillar or pseudocaterpillar tree.

\begin{theorem}\label{Thm_Caterpillar_Pseudocaterpillar}
Let $n \geq 2$ and let $\cS=(S,\boldsymbol{\tau})$ be a caterpillar species tree on $X=\{x_1, \ldots, x_n\}$. Then, if $FP_{\cS}(x_j) > FP_{\cS}(x_i)$ for $x_i, x_j \in X$, we also have $\bE[FP(x_j)\vert \cS] > \bE[FP(x_i) \vert \cS]$. Moreover, if $FP_{\cS}(x_j) = FP_{\cS}(x_i)$, then $\bE[FP(x_j) \vert \cS] = \bE[FP(x_i) \vert \cS]$. In particular, the rankings $r_{(\cS,\textup{FP})}(X)$ and $r_{(\cS,\bE(\textup{FP}))}(X)$ coincide. The same statements hold if $n \geq 4$ and $\cS=(S,\boldsymbol{\tau})$ is a pseudocaterpillar species tree.
\end{theorem}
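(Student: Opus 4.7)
The plan is to handle the caterpillar and pseudocaterpillar cases separately. In each case I would first compute the $FP_{\cS}$-ranking explicitly, then recover the same ranking for expected FP by combining obvious symmetries with iterated applications of Proposition~\ref{Prop_ntaxon}.

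For the caterpillar case, label the leaves so that $[x_1,x_2]$ is the cherry and, for $k\geq 3$, $x_k$ is the sister of the pendant subtree on $\{x_1,\dots,x_{k-1}\}$. An edge-by-edge computation yields $FP_{\cS}(x_k)-FP_{\cS}(x_{k-1})=\tau_{k-2}(k-2)/(k-1)>0$ for $k\geq 3$ and $FP_{\cS}(x_1)=FP_{\cS}(x_2)$, so the $FP_{\cS}$-ranking is exactly $x_n>x_{n-1}>\cdots>x_3>x_1=x_2$. For the expected FP, the symmetry $x_1\leftrightarrow x_2$ in $\cS$ immediately gives $\bE[FP(x_1)\vert\cS]=\bE[FP(x_2)\vert\cS]$. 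For each $k\geq 3$, the pendant subtree on $\{x_1,\dots,x_k\}$ has maximal pendant subtrees of sizes $k-1$ and $1$, exactly matching the hypothesis of Proposition~\ref{Prop_ntaxon}; the Proposition then gives $\bE[FP(x_k)\vert\cS]>\bE[FP(x_i)\vert\cS]$ for every $i<k$, and chaining these strict inequalities finishes the caterpillar case.

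For the pseudocaterpillar case, label the leaves so that $[x_1,x_2]$ is the lower cherry, $[x_3,x_4]$ the upper cherry, and (when $n\geq 5$) $x_5,\dots,x_n$ the outer caterpillar tail; let $v$ denote the root of the 4-leaf pendant subtree. A similar computation yields $FP_{\cS}(x_1)=FP_{\cS}(x_2)<FP_{\cS}(x_3)=FP_{\cS}(x_4)$ with gap $\tau_1/2$, and for $n\geq 5$ a strict chain $FP_{\cS}(x_5)<\cdots<FP_{\cS}(x_n)$ dominating these. The within-cherry ties for expected FP follow from the symmetries $x_1\leftrightarrow x_2$ and $x_3\leftrightarrow x_4$; for $n\geq 5$, iterating Proposition~\ref{Prop_ntaxon} on the pendant subtrees $\{x_1,\dots,x_k\}$ for $k=5,\dots,n$ produces $\bE[FP(x_n)\vert\cS]>\cdots>\bE[FP(x_5)\vert\cS]>\bE[FP(x_i)\vert\cS]$ for every $i\in\{1,2,3,4\}$.

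The remaining step, $\bE[FP(x_3)\vert\cS]>\bE[FP(x_1)\vert\cS]$, is the main obstacle: the 4-leaf pseudocaterpillar pendant subtree has both maximal pendant subtrees of size $2$, so Proposition~\ref{Prop_ntaxon} does not apply. The feature to exploit is that, going backward in time, $x_1,x_2$ share a common ancestral population for duration $\tau_1+\tau_2$ while $x_3,x_4$ share one only for duration $\tau_2$, so $x_1,x_2$ are strictly more likely to coalesce early, making the expected leaf edge of $x_3$ longer than that of $x_1$. To convert this intuition into a proof, I would adapt the pairing-by-exchangeability method from the proof of Proposition~\ref{Prop_ntaxon}: partition $\cH(\cS)$ by which of the two pairs has coalesced below $v$, then use the fact that above $v$ all surviving lineages live in a common ancestral population and are exchangeable to pair histories so that the above-$v$ contributions to $FP(x_3)$ and $FP(x_1)$ cancel after matching the densities $f_h(h,\boldsymbol{t}_h)$. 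What remains is a below-$v$ difference that I would show integrates to a strictly positive value by computing the relevant integrals over the intervals $\tau_1$ and $\tau_2$, splitting each into subintervals in the style of Figure~\ref{Fig_h1h2}. Cleanly executing this integration, while tracking all four sub-cases of coalescence patterns below $v$, is the technical crux.
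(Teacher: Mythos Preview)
Your caterpillar argument and the outer scaffolding of the pseudocaterpillar argument (the cherry symmetries for the ties, and iterated use of Proposition~\ref{Prop_ntaxon} for $x_5,\dots,x_n$) match the paper exactly.

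For the crucial comparison $\bE[FP(x_3)\vert\cS]>\bE[FP(x_1)\vert\cS]$, however, your sketch contains an imprecision that would become a gap if executed as written. You assert that, after pairing histories via exchangeability above $v$, the ``above-$v$ contributions to $FP(x_3)$ and $FP(x_1)$ cancel''. This is true only when the two bundles entering the population above $v$ have equal size (your ``neither'' and ``both'' cases). In the two asymmetric cases the Proposition~\ref{Prop_ntaxon}-style pairing yields an \emph{inequality}, not a cancellation: when only $x_1,x_2$ have coalesced below $v$ the above-$v$ piece is $\geq 0$, but when only $x_3,x_4$ have coalesced it is $\leq 0$. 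So ``what remains'' is not merely a below-$v$ integral; you still owe an argument that the two asymmetric above-$v$ contributions net out nonnegatively, which comes down to comparing the probabilities $(1-e^{-(\tau_1+\tau_2)})e^{-\tau_2}$ and $e^{-(\tau_1+\tau_2)}(1-e^{-\tau_2})$. This can be done, but it is extra work you have not accounted for.

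The paper avoids all of this by using a second symmetry you do not invoke: the involution $(x_1,x_2)\leftrightarrow(x_3,x_4)$ of the four-leaf pendant subtree. Rather than your four cases, the paper partitions $\cH(\cS)$ into five pieces according to the \emph{specific interval} ($\tau_1$, $\tau_2$, or $\tau_3$ and beyond) in which each of $x_1$ and $x_3$ first coalesces. The involution gives density-preserving bijections that force the contribution of $\cH_2$ (both first coalescences in $\tau_2$) to vanish and that pair $\cH_4$ with $\cH_5$ (exactly one first coalescence in $\tau_2$, the other above $v$) so their contributions cancel. Only $\cH_3$ survives, where $x_1$ coalesces already in $\tau_1$ --- something $x_3$ cannot do --- and there a straightforward Proposition~\ref{Prop_ntaxon}-style argument gives strict positivity. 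No integral is ever computed. Your four-case partition is too coarse to isolate this $\tau_1$-excess cleanly; refining to the five cases and using the $(x_1,x_2)\leftrightarrow(x_3,x_4)$ swap is what makes the argument integration-free.
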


\begin{proof}
First, assume that $\cS$ is a caterpillar species tree and without loss of generality let $\cS$ be as depicted in Figure \ref{Fig_CatPseudoCat}. Then, we clearly have $FP_{\cS}(x_n) > FP_{\cS}(x_{n-1}) > \ldots > FP_{\cS}(x_3) > FP_{\cS}(x_2) = FP_{\cS}(x_1)$.
Let $x_i, x_j \in X$ with $FP_{\cS}(x_j) > FP_{\cS}(x_i)$.
Then it follows immediately from Proposition \ref{Prop_ntaxon} (with $x_j=x_m$ and $x_i \in \{x_1, \ldots, x_{m-1}\}$) that $\bE[FP(x_j) \vert \cS] > \bE[FP(x_i) \vert \cS]$.

Now, if $FP_{\cS}(x_j)=FP_{\cS}(x_i)$, we must have $\{x_i,x_j\} = \{x_1,x_2\}$. In particular, $x_i$ and $x_j$ form the unique cherry in $\cS$. This immediately implies that $x_i$ and $x_j$ are exchangeable and thus $\bE[FP(x_j) \vert \cS] = \bE[FP(x_i) \vert \cS]$ as claimed. 

As $FP_{\cS}(x_n) > FP_{\cS}(x_{n-1}) > \ldots > FP_{\cS}(x_3) > FP_{\cS}(x_2) = FP_{\cS}(x_1)$ and all inequality and equality signs are ``preserved'' when considering the expected FP index, we clearly have $r_{(\cS,\textup{FP})}(X)=r_{(\cS,\bE(\textup{FP}))}(X)$. This completes the first part of the proof.\\

\noindent Now, assume that $\cS$ is a pseudocaterpillar species tree with $n\geq 4$ leaves as depicted in Figure \ref{Fig_CatPseudoCat}. In this case, we have $FP_{\cS}(x_n) > FP_{\cS}(x_{n-1}) > \ldots > FP_{\cS}(x_5) > FP_{\cS}(x_4) = FP_{\cS}(x_3) > FP_{\cS}(x_2) = FP_{\cS}(x_1)$. In particular, if $FP_{\cS}(x_i) = FP_{\cS}(x_j)$, we have $\{x_i,x_j\} = \{x_1,x_2\}$ or $\{x_i,x_j\} = \{x_3,x_4\}$. In both cases, $x_i$ and $x_j$ form a cherry and are clearly exchangeable. We thus have $\bE[FP(x_j) \vert \cS] = \bE[FP(x_i) \vert \cS]$ as claimed. Now, consider $x_i, x_j \in X$ with $FP_{\cS}(x_j) > FP_{\cS}(x_i)$. If $x_j \in \{x_5, \ldots, x_n\}$ (and $x_i \in \{x_1, \ldots, x_n\}$ with $i < j$) it follows from Proposition \ref{Prop_ntaxon} that $\bE[FP(x_j) \vert \cS] > \bE[FP(x_i) \vert \cS]$ as claimed. It thus only remains to consider the case that $x_j \in \{x_3,x_4\}$ and $x_i \in \{x_1,x_2\}$. Without loss of generality, assume $x_j=x_3$ and $x_i=x_1$. Considering 
\begin{align*}
\bE[FP(x_3)-FP(x_1) \vert \cS] 
&= \sum\limits_{h \in \cH(\cS)} \, \int \left[ FP_{h}(x_3) - FP_{h}(x_1) \right] f_h(h, \boldsymbol{t}_h) \, d\boldsymbol{t}_h, 
\end{align*}
we now note that $\cH(\cS)$ can be partitioned into five disjoint subsets:
\begin{enumerate}[(i)]
    \item The subset of histories, $\cH_1(\cS)$ say, for which both the coalescence involving $x_1$ and the coalescence involving $x_3$ occur in interval $\tau_3$ or later. In this case, $x_1$ and $x_3$ are exchangeable and we can conclude that
    \begin{align*}
       \sum\limits_{h \in \cH_1(\cS)} \, \int \left[ FP_{h}(x_3) - FP_{h}(x_1) \right] f_h(h, \boldsymbol{t}_h) \, d\boldsymbol{t}_h = 0.
    \end{align*}
    \item The subset of histories, $\cH_2(\cS)$ say, for which both the coalescence involving $x_1$ and the coalescence involving $x_3$ occur in interval $\tau_2$. By the symmetry of the 4-leaf subtree of $\cS$ containing leaves $\{x_1,x_2,x_3,x_4\}$, leaves $x_1$ and $x_3$ are also exchangeable (for each history $h'$ where the cherry $[x_1,x_2]$ is formed prior to the cherry $[x_3,x_4]$, there exists a history $h''$ identical to $h'$ except that the cherry $[x_3,x_4]$ is formed prior to the cherry $[x_1,x_2]$; in particular, the roles of $x_1$ and $x_3$ are interchanged between $h'$ and $h''$) and we can conclude that 
    \begin{align*}
        \sum\limits_{h \in \cH_2(\cS)} \, \int \left[ FP_{h}(x_3) - FP_{h}(x_1) \right] f_h(h, \boldsymbol{t}_h) \, d\boldsymbol{t}_h = 0.
    \end{align*}
    \item The subset of histories, say $\cH_3(\cS)$, for which the coalescence involving $x_1$ occurs in interval $\tau_1$ and the coalescence involving $x_3$ occurs in interval $\tau_2$ or later. In this case, $x_1$ enters the interval $\tau_3$ as part of a subtree, $T_{x_1}$ say, containing precisely two leaves, whereas $x_3$ enters the interval $\tau_3$ as part of a subtree, $T_{x_3}$ say, containing at most two leaves. 
    As in the proof of Proposition \ref{Prop_ntaxon}, $T_{x_1}$ and $T_{x_3}$ are exchangeable in the sense that each history $h \in \cH_3(\cS)$ either (a) contains a pendant subtree $(T_{x_1}, T_{x_3})$; or (b) if a history $h' \in \cH_3(\cS)$ does not contain $(T_{x_1},T_{x_3})$ as a pendant subtree, there exists a history $h''$ identical to $h'$ except that the roles of $T_{x_1}$ and $T_{x_3}$ are exchanged and such that $\boldsymbol{t}_{h'} = \boldsymbol{t}_{h''}$ and $f_{h'}(h',\boldsymbol{t}_{h'})=f_{h''}(h'',\boldsymbol{t}_{h''})$. Moreover, as $x_1$ coalesces strictly before $x_3$, for each history in $\cH_3(\cS)$ the pendant edge incident to $x_3$ is strictly larger than the pendant edge incident to $x_1$. 
    Using this and the exchangeability of $T_{x_1}$ and $T_{x_3}$ together with the fact that $T_{x_3}$ contains at most two leaves, whereas $T_{x_1}$ contains precisely two leaves, we can thus conclude that 
    \begin{align*}
        \sum\limits_{h \in \cH_3(\cS)} \, \int \left[ FP_{h}(x_3) - FP_{h}(x_1) \right] f_h(h, \boldsymbol{t}_h) \, d\boldsymbol{t}_h > 0.
    \end{align*}
    \end{enumerate}
The last two remaining subsets are the subsets, $\mathcal{H}_4(\cS)$ and $\mathcal{H}_5(\cS)$ say, where $\mathcal{H}_4(\cS)$ is the subset of histories for which the coalescence involving $x_1$ occurs in $\tau_2$ and the coalescence involving $x_3$ occurs in $\tau_3$ or later, and $\mathcal{H}_5(\cS)$ is the subset of histories for which the coalescence involving $x_1$ occurs in $\tau_3$ or later and the coalescence involving $x_3$ occurs in $\tau_2$. By the symmetry of the 4-leaf subtree of $\cS$ containing leaves $\{x_1, x_2, x_3, x_4\}$, the sets $\mathcal{H}_4(\cS)$ and $\mathcal{H}_5(\cS)$ have the same cardinality. Now, for each history $h' \in \mathcal{H}_4(\mathcal{S})$, there exists precisely one history $h'' \in \mathcal{H}_5(\mathcal{S})$ identical to $h'$, except that in $h'$ the most recent coalescence at time $t_1$ corresponds to the formation of the cherry $[x_1,x_2]$, whereas in $h''$, the most recent coalescence at time $t_1$ corresponds to the formation of the cherry $[x_3,x_4]$; in particular, the roles of $x_1$ and $x_3$ are interchanged between $h'$ and $h''$ (similar to Case 2 discussed above, except that not necessarily both cherries $[x_1,x_2]$ and $[x_3,x_4]$ are formed in $h'$, respectively $h''$). Moreover, $\boldsymbol{t}_{h'}=\boldsymbol{t}_{h''}$ and $f_{h'}(h',\boldsymbol{t}_h')=f_{h''}(h'',\boldsymbol{t}_h'')$. Furthermore, $FP_{h'}(x_1)=FP_{h''}(x_3)$ and $FP_{h''}(x_1)=FP_{h'}(x_3)$, and thus we can cancel the contribution of $h'$ towards $\bE[FP(x_3)-FP(x_1) \vert \cS]$ with the corresponding contribution of $h''$.   
    
Taking all cases together, this implies that $\bE[FP(x_3)-FP(x_1) \vert \cS] > 0$ and thus $\bE[FP(x_3) \vert \cS] > \bE[FP(x_1) \vert \cS]$ as claimed. 
Now, as $FP_{\cS}(x_n) > FP_{\cS}(x_{n-1}) > \ldots > FP_{\cS}(x_4) = FP_{\cS}(x_3) > FP_{\cS}(x_2) = FP_{\cS}(x_1)$ and all inequality and equality signs are preserved when considering the expected FP index, we clearly have $r_{(\cS,\textup{FP})}(X)=r_{(\cS,\bE(\textup{FP}))}(X)$. This completes the proof.
\end{proof}

\noindent As all species trees with at most four leaves are either caterpillar or pseudocaterpillar trees, Theorem \ref{Thm_Caterpillar_Pseudocaterpillar} immediately leads to the following corollary:

\begin{corollary}\label{Cor_atmost4}
If $\cS=(S,\boldsymbol{\tau})$ is a species tree on $X = \{x_1, \ldots, x_n\}$ with $n \leq 4$, the rankings $r_{(\cS,\textup{FP})}(X)$ and $r_{(\cS,\bE(\textup{FP}))}(X)$ coincide.
\end{corollary}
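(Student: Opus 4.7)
The plan is to reduce the corollary directly to Theorem \ref{Thm_Caterpillar_Pseudocaterpillar} by verifying that every rooted binary phylogenetic tree on at most four leaves is either a caterpillar or a pseudocaterpillar, so the theorem applies in every case.

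First I would dispose of the small cases. For $n=2$, the only rooted binary tree is a single cherry, which has exactly one cherry and is thus a caterpillar. For $n=3$, there is (up to labeling) a unique rooted binary phylogenetic tree topology, consisting of a cherry together with a single outgroup leaf; again this has exactly one cherry and so is a caterpillar. These cases invoke the caterpillar half of Theorem \ref{Thm_Caterpillar_Pseudocaterpillar} directly.

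Next I would handle $n=4$, which is the only case with genuine topological variety. Up to labeling there are exactly two rooted binary topologies on four leaves: the balanced tree, in which the root has two children each subtending a cherry, and the unbalanced (ladder) tree. The unbalanced one has exactly one cherry and is therefore a caterpillar in the sense of the paper. The balanced one has two cherries, and since $n=4$ those two cherries together constitute a pendant subtree of size four (namely the whole tree), so by the definition given it is a pseudocaterpillar. Thus in either case, with the branch length assignment $\boldsymbol{\tau}$ inherited from $\cS$, the species tree falls under one of the two hypotheses of Theorem \ref{Thm_Caterpillar_Pseudocaterpillar}, yielding $r_{(\cS,\textup{FP})}(X)=r_{(\cS,\bE(\textup{FP}))}(X)$.

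There is no substantive obstacle here; the only thing to be slightly careful about is the topology enumeration for $n=4$ and the fact that the paper's definition of pseudocaterpillar allows $n\geq 4$ (the footnote explicitly widens Degnan's original $n\geq 5$ definition), so that the balanced 4-leaf tree is indeed covered. Once the enumeration is observed, the corollary is an immediate consequence of the theorem.
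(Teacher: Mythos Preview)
Your proposal is correct and follows exactly the paper's approach: the paper simply remarks that all species trees with at most four leaves are either caterpillar or pseudocaterpillar trees, so Theorem~\ref{Thm_Caterpillar_Pseudocaterpillar} applies. Your more explicit enumeration of the $n=2,3,4$ topologies (and your note that the $n\geq 4$ pseudocaterpillar definition covers the balanced 4-leaf tree) just spells out what the paper leaves as an immediate observation.
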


\subsection{On the existence of species trees with different rankings induced by the FP index and the expected FP index for all \texorpdfstring{$n \geq 5$}{n >= 5}}

After having analyzed some circumstances under which the FP index calculated for a given species tree and the expected FP index across all gene tree histories associated with this species tree induce the same ranking, in this section we show that for all leaf numbers greater or equal to 5, there exist species trees for which the two rankings differ.

\begin{theorem}\label{Thm_rankswap}
For all $n \in \mathbb{N}_{\geq 5}$, there exists a species tree $\cS=(S,\boldsymbol{\tau})$ on $X = \{x_1, \ldots, x_n\}$ such that $\cS$ contains two leaves, $x_i$ and $x_j$ say, with $FP_{\cS}(x_i) > FP_{\cS}(x_j)$ but $\bE[FP(x_i) \vert \cS] < \bE[FP(x_j) \vert \cS]$. In particular, $\cS $ is such that the rankings $r_{(\cS, \textup{FP})}(X)$ and $r_{(\cS, \bE(\textup{FP}))}(X)$ are different.
\end{theorem}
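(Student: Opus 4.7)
By Corollary~\ref{Cor_atmost4} combined with Theorem~\ref{Thm_Caterpillar_Pseudocaterpillar}, a rank-swap example for $n=5$ cannot live on a caterpillar or pseudocaterpillar. Up to leaf labelling, the only remaining $5$-leaf shape is the ``balanced'' one $S = ((x_1,x_2),((x_3,x_4),x_5))$, and the plan is to realise the swap on a species tree $\cS$ of this shape. By Proposition~\ref{Prop_ntaxon} applied to the pendant subtree on $\{x_3,x_4,x_5\}$ and by exchangeability of $\{x_1,x_2\}$, the only comparisons that are not automatically settled are those between $\{x_1,x_2\}$ and $\{x_3,x_4,x_5\}$. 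A direct computation gives $FP_{\cS}(x_5) - FP_{\cS}(x_1) = \tfrac{2c}{3} - \tfrac{d}{6} - \tfrac{a}{2}$, where $a$ is the age of the cherry $[x_1,x_2]$, $c$ the age of $((x_3,x_4),x_5)$, and $d$ the root height; this is strictly positive whenever $a$ is small and $c$ is close to $d$.

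To produce the swap, one then evaluates $\bE[FP(x_5)\vert\cS] - \bE[FP(x_1)\vert\cS]$ by enumerating the ranked gene tree histories $h \in \cH(\cS)$, computing $\int \bigl(FP_h(x_5) - FP_h(x_1)\bigr)\, f_h(h,\boldsymbol{t}_h)\, d\boldsymbol{t}_h$, and summing. The exchangeability-pairing technique used in Proposition~\ref{Prop_ntaxon} and Theorem~\ref{Thm_Caterpillar_Pseudocaterpillar} cancels many contributions (those related by swapping $x_1 \leftrightarrow x_2$ or $x_3 \leftrightarrow x_4$, and those linked by cherry-formation symmetries in the $4$-leaf subtree on $\{x_1,x_2,x_3,x_4\}$), reducing the task to a small number of symbolic integrals in $\tau_0,\tau_1,\tau_2,\tau_3$. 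The guiding intuition is that when $\tau_1,\tau_2$ are small there is substantial incomplete lineage sorting among $\{x_3,x_4,x_5\}$, which ``flattens'' their expected FPs toward one another, whereas the expected FPs of $x_1,x_2$ are comparatively unaffected. Choosing $a,\tau_1,\tau_2$ small so that $FP_{\cS}(x_5) - FP_{\cS}(x_1)$ is only marginally positive should then flip the sign in expectation, yielding the required pair $(x_i,x_j)=(x_1,x_5)$.

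\textbf{Inductive step ($n\mapsto n+1$).} Given $\cS_n$ on $\{x_1,\ldots,x_n\}$ realising a swap between $x_i$ and $x_j$, form $\cS_{n+1}$ by attaching $x_{n+1}$ as an outgroup above the root of $\cS_n$, separated by a new internal interval of length $L>0$. On the species tree, the new edge above the old root contributes $L/n$ to $FP_{\cS_{n+1}}(x)$ for every $x \in \{x_1,\ldots,x_n\}$, so $FP_{\cS_{n+1}}(x_j) - FP_{\cS_{n+1}}(x_i) = FP_{\cS_n}(x_j) - FP_{\cS_n}(x_i) \neq 0$ and the species-tree comparison is preserved. For the expected FP, split $FP_h(x)$ into an ``in'' contribution from edges of $h$ at heights $\leq H_n$ (where $H_n$ is the old root height) and an ``above'' contribution. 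The ``in'' dynamics on the $n$-subtree of $\cS_{n+1}$ agree in distribution with those on $\cS_n$. As $L \to \infty$, the probability that all $n$-subtree lineages coalesce to a single lineage before the outgroup joins tends to $1$; in that event the ``above'' contribution is a common additive term for $x_1,\ldots,x_n$ and hence cancels in the difference. It follows that $\bE[FP(x_j)-FP(x_i)\vert \cS_{n+1}] \to \bE[FP(x_j)-FP(x_i)\vert \cS_n] \neq 0$ as $L \to \infty$, so for $L$ large enough the expected-FP comparison, and thus the rank swap, is preserved.

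\textbf{Main obstacle.} The technical heart of the proof is the base case: while exchangeability cancels many histories, the remaining integrals over coalescent times are still tedious, and one needs a concrete numerical choice of branch lengths together with (symbolic) verification that the sign indeed flips. A degenerate limit in which $\tau_0,\tau_1,\tau_2 \to 0$ (so that $\{x_3,x_4,x_5\}$ effectively enters $\tau_3$ simultaneously as three independent lineages and standard three-lineage coalescent formulas apply) should make the computation essentially closed form, after which a continuity/perturbation argument yields an open set of branch-length vectors exhibiting the swap.
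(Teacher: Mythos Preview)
Your overall strategy --- establish a rank swap on a specific $5$-taxon species tree and then extend to all $n>5$ --- matches the paper's, and you correctly identify the unique non-(pseudo)caterpillar $5$-leaf shape as the place to look. However, there are two genuine gaps.

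\textbf{Base case.} You do not actually prove it. The heuristic that heavy incomplete lineage sorting among $\{x_3,x_4,x_5\}$ ``flattens'' their expected FP indices is correct, but the degenerate limit you propose does not settle the question by itself: as the relevant interval lengths tend to zero the species-tree difference $FP_{\cS}(x_5)-FP_{\cS}(x_1)$ also tends to zero, so one needs a first-order comparison, not just a limit. The paper does not attempt a symbolic/perturbative argument; it simply fixes $\boldsymbol{\tau}=(2.114,\,0.1269,\,0.00293,\,0.37684)$ on this topology (with the swap occurring between your $x_1$ and your $x_5$, i.e.\ the paper's $x_1$ and $x_3$) and verifies numerically, via an explicit enumeration of all histories, that $FP_{\cS}(x_3)>FP_{\cS}(x_1)$ while $\bE[FP(x_3)\mid\cS]<\bE[FP(x_1)\mid\cS]$.

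\textbf{Extension step.} Your limiting argument is a genuinely different and more elementary route than the paper's, and the idea is sound, but the decomposition as you state it does not work. Cutting at the old root height $H_n$ does \emph{not} make the ``above'' contribution a common additive term on the event $A$ that all $n$ old lineages coalesce before meeting $x_{n+1}$: the coalescences of those $n$ lineages occurring above $H_n$ still contribute differently to $FP_h(x_i)$ and $FP_h(x_j)$ (the edge-lengths above $H_n$ are divided by different descendant counts depending on which lineage each leaf sits in). The correct version is to observe that on $A$ one has $FP_h(x)=FP_{h'}(x)+\tfrac{1}{n}(\text{length of the edge above the root of }h')$ for every $x\in\{x_1,\ldots,x_n\}$, where $h'$ is the induced $n$-taxon history; the second term is common and cancels, while $h'$ (conditional on $A$) has the distribution of an $\cS_n$-history conditioned on its last coalescence occurring before time $L$ in the root interval. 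Since $\Pr(A^c)=O(e^{-L})$ and $|FP_h(x_j)-FP_h(x_i)|$ has expectation $O(L)$, the $A^c$ contribution vanishes, and dominated convergence gives $\bE[FP(x_j)-FP(x_i)\mid\cS_{n+1}]\to\bE[FP(x_j)-FP(x_i)\mid\cS_n]$ as $L\to\infty$.

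For comparison, the paper does not argue by induction or by a limit in $L$. It attaches the $5$-taxon tree as a pendant subtree of an \emph{arbitrary} tree on $n-4$ leaves, cuts every history at the fixed height $\eta$ of the $5$-taxon subtree, and shows that (i) the ``chopped'' expected FP already satisfies the swapped inequality (computed once, on the $5$-taxon tree), and (ii) the ``top'' contribution preserves the inequality for all possible top parts, via an exchangeability pairing of the two lineages $l_1,l_3$ emerging from the $5$-taxon subtree together with the explicit joint distribution of the subtree sizes below them. This is more work but yields the result for all shapes above simultaneously, without any asymptotic in branch lengths; your approach is shorter but only produces caterpillar-like extensions with one long branch at each stage.
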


\noindent The strategy to prove this theorem is to show that there exists a species tree $\cS=(S,\boldsymbol{\tau})$ on precisely 5 leaves for which the rankings $r_{(\cS, \textup{FP})}(X)$ and $r_{(\cS, \bE(\textup{FP}))}(X)$ differ. We then construct a species tree $\cS'$ on $n > 5$ leaves containing $\cS$ as a subtree and show that the rank swap in $\cS$ is preserved in $\cS'$. For this purpose, we will decompose all histories evolving within $\cS$, respectively $\cS'$, by ``cutting'' them at a certain height determined by the species tree. We now explain this procedure in more detail.\\

\noindent Let $\cS=(S,\boldsymbol{\tau})$ be a species tree of height $\eta$ and let $h \in \cH(\cS)$ be a history evolving within $\cS$. Let $0< \eta' \leq \eta$. By \emph{cutting} history $h$ at height $\eta'$, we mean subdividing all edges of $h$ present at time $\eta'$ into two new edges, thereby disconnecting $h$. In case there is an interior vertex $v$ of $h$ present at time $\eta'$, we also subdivide it into two copies, one of which forming the root of a subtree of $h$ with the outgoing edges, and the other being a new leaf for the incoming edge. This procedure will subdivide $h$ into a forest of several trees, one of which, say $h^\mathit{top}$, will contain the root of $h$. We call $h^\mathit{top}$ the \emph{top part of $h$ concerning cut height $\eta'$} and we refer to the other trees, say $h^1, \ldots, h^k$, in the forest as the \emph{chopped parts of $h$ concerning cut height $\eta'$}. Moreover, if we attach to each leaf of $h^\mathit{top}$ the number of descending leaves it had in $h$ by edges of length 0 (note that the resulting tree need not be binary and may contain degree-2 vertices), we derive the \emph{extended top part of $h$ concerning cut height $\eta'$}, which we denote by $h^\mathit{top}_\mathit{ext}$. Whenever there is no ambiguity concerning the cut height, we drop the explicit mention of $\eta'$. 
As an example, consider the species tree $\cS$ and the history $h$ evolving within $\cS$ depicted in Figure~\ref{Fig_5Leaves_Notation} and let $\eta'=\tau_0+\tau_1+\tau_2$. Then cutting $h$ at height $\eta'$ leads to the decomposition of $h$ into its (extended) top and chopped parts as depicted in Figure~\ref{Fig_Cutting}.

\begin{figure}[htbp]
    \centering
    \includegraphics[scale=0.2]{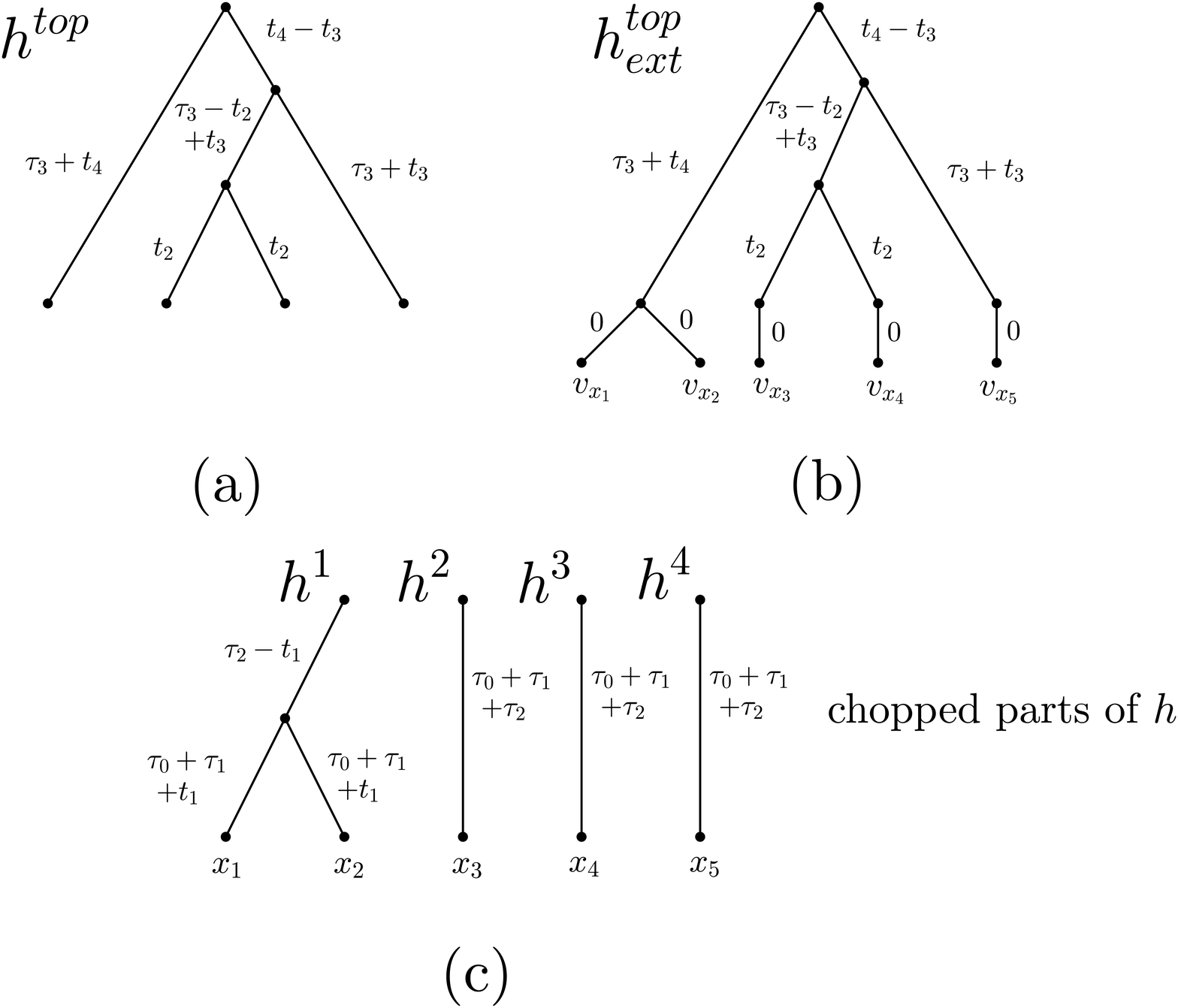}
    \caption{Decomposition of gene tree history $h$ evolving within species tree $\cS$ as depicted in Figure~\ref{Fig_5Leaves_Notation} when cutting at height $\eta'=\tau_0+\tau_1+\tau_2$ into its (a) top part, (b) extended top part, and (c) chopped parts.}
    \label{Fig_Cutting}
\end{figure}

\noindent We now show that given a cut height $0 < \eta' \leq \eta$, the FP index of a taxon $x_i$ on history $h$ can be calculated by considering the decomposition of $h$ into its extended top part and chopped parts concerning cut height $\eta'$.

\begin{lemma}\label{Lemma_cutting}
Let $\cS=(S,\boldsymbol{\tau})$ be a species tree on taxon set $X=\{x_1, \ldots, x_n\}$ and let $\eta=\tau_0 + \ldots + \tau_{n-2}$ denote its height. Let $h \in \cH(\cS)$ be a history evolving within $\cS$. Moreover, let $0< \eta' \leq \eta$, and let $h^\mathit{top}_\mathit{ext}$ be the extended top part of $h$ concerning cut height $\eta'$ and let $h^1, \ldots, h^k$ be the chopped parts of $h$. Then, for each $x_i$ in $X$, we have
\begin{align}
    FP_h(x_i) = FP_{h^i}(x_i) + FP_{h^\mathit{top}_\mathit{ext}}(v_{x_i}), \label{FP_Cut}
\end{align}
where $h^i$ is the chopped part of $h$ containing $x_i$ (note that the root of $h^i$ might have out-degree 1) and $v_{x_i}$ is one of the newly attached leaves of $h^\mathit{top}_\mathit{ext}$ whose parent is an ancestor of $x_i$ in $h$.
\end{lemma}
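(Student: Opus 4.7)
The strategy is to unfold the defining sum $FP_h(x_i) = \sum_{e \in P(h;\rho,x_i)} l(e)/n(e)$ and show that cutting at height $\eta'$ redistributes each edge's contribution correctly between the two terms on the right-hand side. I would begin by fixing $x_i$ and partitioning the root-to-$x_i$ path in $h$ into three consecutive blocks according to position relative to the cut height: edges lying entirely above $\eta'$, at most one edge crossing $\eta'$, and edges lying entirely below $\eta'$.

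For edges in the first block, each such edge appears unchanged in $h^\mathit{top}$ (and hence in $h^\mathit{top}_\mathit{ext}$) with the same length. The key invariance is that descendant counts are preserved: by construction each leaf $\ell$ of $h^\mathit{top}$ is decorated in $h^\mathit{top}_\mathit{ext}$ with exactly as many new length-$0$ leaves as it had original descendants in $h$, so every edge strictly above $\eta'$ has identical $n(\cdot)$ in $h$ and in $h^\mathit{top}_\mathit{ext}$. Dually, edges in the third block lie inside the single chopped part $h^i$ containing $x_i$ with their lengths unchanged, and since all of their descendant leaves in $h$ remain inside $h^i$, their $n(\cdot)$ values are also preserved.

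The crossing edge $e^c$, if present, is the one case requiring real care. The cutting procedure subdivides it into a top piece in $h^\mathit{top}$ and a bottom piece in $h^i$ whose lengths sum to $l(e^c)$. Both pieces inherit the same descendant count as $e^c$ had in $h$: for the top piece this is again the extended-top construction, and for the bottom piece it is immediate because all descendants of $e^c$ sit inside $h^i$. Hence their combined contribution is exactly $l(e^c)/n_h(e^c)$, matching the original. Finally, the length-$0$ edge from $\ell$ to the newly attached leaf $v_{x_i}$ contributes $0$ to $FP_{h^\mathit{top}_\mathit{ext}}(v_{x_i})$, which also explains why the choice of $v_{x_i}$ among the leaves attached below $\ell$ is immaterial. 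Summing block-by-block then gives the claimed identity.

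The main obstacle I foresee is the bookkeeping for the crossing edge, together with a small degenerate case: when $x_i$'s lineage crosses $\eta'$ without having coalesced below it, the chopped part $h^i$ is essentially a single pendant edge whose upper endpoint has out-degree $1$, and one has to be explicit that this single edge still contributes $FP_{h^i}(x_i)$ in the usual way. Neither introduces a new contribution, but the write-up must state the conventions clearly so that every edge of $P(h;\rho,x_i)$ is accounted for exactly once.
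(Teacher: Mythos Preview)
Your proposal is correct and follows essentially the same approach as the paper: decompose the root-to-$x_i$ path at the cut height and verify that each edge's contribution $l(e)/n(e)$ is preserved in the corresponding piece. The paper's write-up is somewhat terser---it simply splits the path into $P_i$ and $P_\mathit{top}$ plus the subdivided edge $e_{\mathit{cut}}$ and asserts $n(e_{\mathit{cut}})=n(e_i)=n(e_{\mathit{top}})$---whereas you spell out more explicitly \emph{why} the descendant counts are invariant (via the extended-top construction) and note the vanishing contribution of the length-$0$ edge to $v_{x_i}$; but the argument is the same.
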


\begin{proof}
Considering the path $P(h; \rho, x_i)$ from the root $\rho$ in $h$ to leaf $i$ with respect to cut height $\eta'$, we can decompose this path into two subpaths, $P_i$ and $P_{top}$ say, where $P_i$ is subpath belonging to $h^i$, except possibly for a subdivided edge $e_{cut}$, and $P_{top}$ is the subpath belonging to the top part $h^\mathit{top}$ of history $h$, again except for edge $e_{cut}$ (if it exists).
In case such a subdivided edge $e_{cut}$ exists, let us denote by $e_i$ the part of $e_{cut}$ that belongs to the chopped part $h^i$ of history $h$ and by $e_{top}$ the part of $e_{cut}$ which belongs to the top part $h^\mathit{top}$ of history $h$. Setting $l(e_{cut})=0$ (implying $l(e_i)=l(e_{top})=0$) in case no edge in $P(h;\rho,x_i)$ was subdivided and noting that $n(e_{cut})=n(e_i)=n(e_{top})$, we get
\begin{align*}
    FP_h(x_i) &= \sum_{e \in P(h;\rho,x_i)} \frac{l(e)}{n(e)} 
    = \sum_{e \in P_i} \frac{l(e)}{n(e)} + \sum_{e \in P_{top}} \frac{l(e)}{n(e)} + \frac{l(e_i)}{n(e_i)} + \frac{l(e_{top})}{n(e_{top})} \\
    &= \underbrace{\left( \sum_{e \in P_i} \frac{l(e)}{n(e)} + \frac{l(e_i)}{n(e_i)} \right)}_{= FP_{h^i}(x_i)} + \underbrace{\left(\sum_{e \in P_{top}} \frac{l(e)}{n(e)} + \frac{l(e_{top})}{n(e_{top})}\right)}_{= FP_{h^\mathit{top}_\mathit{ext}}(v_{x_i})} \\
    &= FP_{h^i}(x_i) + FP_{h^\mathit{top}_\mathit{ext}}(v_{x_i}).
\end{align*}
This completes the proof.
\end{proof}

\noindent As an example consider history $h$ depicted in Figure~\ref{Fig_5Leaves_Notation} and its decomposition concerning cut height $\eta'=\tau_0+\tau_1+\tau_2$ depicted in Figure~\ref{Fig_Cutting}. Here, we have $FP_h(x_1) = \tau_0 + \tau_1 + \frac{\tau_2+\tau_3+t_1+t_4}{2}$, $FP_{h^1}(x_1)=\tau_0+\tau_1+\frac{\tau_2+t_1}{2}$, and $FP_{h^\mathit{top}_\mathit{ext}}(v_{x_1}) = \frac{\tau_3+t_4}{2}$. In particular, $FP_h(x_1) = FP_{h^1}(x_1) + FP_{h^\mathit{top}_\mathit{ext}}(v_{x_1})$ as stated by Lemma~\ref{Lemma_cutting}. \\

\noindent Now, considering the expected FP index across all gene tree histories associated with a species tree $\cS$ and given a cut height $0 < \eta' \leq \eta$, as a consequence of Lemma~\ref{Lemma_cutting} and using linearity of the expectation, we obtain the following decomposition of Equation~\eqref{Expected_FP}:
\begin{align}
\bE[FP(x_i)\vert \cS] 
&= \sum\limits_{h \in \cH(\cS)} \bE[FP_h(x_i)\vert\cS] \nonumber \\
&= \sum\limits_{h \in \cH(\cS)} \bE[FP_{h^i}(x_i) + FP_{h^\mathit{top}_\mathit{ext}}(v_{x_i}) \vert \cS] \nonumber \\
&=  \sum\limits_{h \in \cH(\cS)} \left( \bE[FP_{h^i}(x_i) \vert \cS] + \bE[FP_{h^\mathit{top}_\mathit{ext}}(v_{x_i})\vert \cS] \right)\nonumber \\
&= \underbrace{\sum\limits_{h \in \cH(\cS)} \bE[FP_{h^i}(x_i) \vert \cS]}_{\eqqcolon \bE[FP_\text{chop}^{\eta'}(x_i)\vert \cS]}
+ \underbrace{\sum\limits_{h \in \cH(\cS)} \bE[FP_{h^\mathit{top}_\mathit{ext}}(v_{x_i})\vert \cS]}_{\eqqcolon \bE[FP_\text{top}^{\eta'}(x_i)\vert\cS]}.  \label{FP_Chop_Top_Decomp}
\end{align}
In the following, we will refer to the first summand in Equation~\eqref{FP_Chop_Top_Decomp} as the `expected chopped FP index' of $x_i$, denoted as $\bE[FP_\text{chop}^{\eta'}(x_i)\vert \cS]$, and to the second summand as the `expected top FP index' of $x_i$, denoted as $\bE[FP_\text{top}^{\eta'}(x_i)\vert\cS]$.
With this we are now finally in the position to prove Theorem~\ref{Thm_rankswap}.

\begin{proof}[Proof of Theorem~\ref{Thm_rankswap}]
We first show that the statement holds for $n=5$. Let $\cS=(S,\boldsymbol{\tau})$ be as depicted in Figure~\ref{Fig_5Leaves_Notation} and set $\boldsymbol{\tau}=(\tau_0,\tau_1,\tau_2,\tau_3) = (2.114, 0.1269, 0.00293, 0.37684)$. Then, it can easily be seen that $FP_{\cS}(x_3) \approx 2.36944 > 2.367335 = FP_{\cS}(x_1)$. Moreover, we used the computer algebra system Mathematica \citep{Mathematica} to explicitly enumerate all gene tree histories in $\cH(\cS)$ and calculate the expected FP indices of all taxa according to Equation \eqref{Expected_FP} (the full expressions in terms of $(\tau_0,\tau_1,\tau_2,\tau_3,\tau_4)$ are given in the Appendix). In particular,  $\bE[FP(x_3)\vert \cS] \approx 3.23846178 < 3.2940675 \approx \bE[FP(x_1)\vert \cS] $. This completes the proof for $n=5$. 

In the following, we will construct a species tree $\cS'$ on $n > 5$ leaves containing $\cS$ as a pendant subtree such that the rank swap of taxa $x_1$ and $x_3$ is preserved in $\cS'$. However, before we describe this construction in more detail, we remark that if we cut all histories in $\cH(\cS)$ at the height of $\cS$, i.e., at height $\eta = \tau_0 + \tau_1 + \tau_2 + \tau_3$, using Mathematica \citep{Mathematica} we obtained the following expected chopped FP indices for taxa $x_1$ and $x_3$ and the choice of $\boldsymbol{\tau}$ given above:
\begin{align*}
    \bE[FP_{\text{chop}}^\eta(x_1)\vert \cS] &\approx 2.5661 \quad \text{and} \quad 
    \bE[FP_{\text{chop}}^\eta(x_3)\vert \cS] \approx 2.5641.
\end{align*}
In particular, the order of $x_1$ and $x_3$ induced by the expected chopped FP index coincides with the order induced by the overall expected FP index and contradicts the order induced by the FP index on the species tree. Moreover, considering the extended top parts of all histories $h \in \cH(\cS)$ with respect to cut height $\eta$, using Mathematica \citep{Mathematica} we calculated the distribution of the number of new leaves that are attached to the last ancestor of $x_1$, respectively $x_3$, in the top part $h^\mathit{top}$ of each history $h$ (again for the choice of $\boldsymbol{\tau}$ given above). Let $n_1$ be the number of leaves attached to the last ancestor of $x_1$ and let $n_3$ denote the number of leaves attached to the last ancestor of $x_3$. Then, $n_1 \in \{1,2\}$ (if a history $h$ is such that $x_1$ and $x_2$ coalesced before time $\eta$, $n_1$ will be two; otherwise $n_1$ will be one) and $n_3 \in \{1,2,3\}$ (where $n_3=3$ if $h$ is such that all of $x_3,x_4,x_5$ coalesced before time $\eta$, $n_3=2$ if $x_3$ coalesced with $x_4$ or $x_5$ but all three taxa did not coalesce, and $n_3=1$ if $x_3$ did not coalesce) and we have the following distribution (Table~\ref{Table_ni_dist}):
    \begin{table}[htbp]
    \caption{Distribution of $n_1$ and $n_3$ as described in the proof of Theorem \ref{Thm_rankswap}.}
    \centering
    \begin{tabular}{|c|c|c|c|}
    \hline
    $m$ & 1 & 2 & 3\\
    \hline
    $\bP(n_1=m)$ & 0.602499 & 0.397501 & 0 \\
    \hline
    $\bP(n_3=m)$ & 0.504977 & 0.362098 & 0.132925 \\
    \hline
    \end{tabular} 
    \label{Table_ni_dist}
    \end{table}
    
Keeping these findings in mind, we now construct a species tree $\cS'=(S', \boldsymbol{\tau}')$ on $n > 5$ leaves as follows. We take any tree topology $S'$ on $n-4$ leaves and replace one of the leaves by $S$. For $\boldsymbol{\tau}'$, we keep the interval lengths as induced by $\boldsymbol{\tau}$ and choose all remaining interval lengths such that $\mathcal{S'}=(S', \boldsymbol{\tau}')$ is an ultrametric species tree. We now argue that $\cS'$ has the properties that $FP_{\cS'}(x_3) > FP_{\cS'}(x_1)$, but $\bE[FP(x_3)\vert \cS'] < \bE[FP(x_1)\vert \cS']$.

First, as $\cS$ replaced a leaf of $\cS'$, all edges on the path from the root of $\cS'$ to the root of $\cS$ contribute the same amount, say $c$, to the FP index of any leaf in $\{x_1, \ldots, x_5\}$. Moreover, no edge lengths within the subtree $\cS$ of $\cS'$ were changed. Hence, we have $FP_{\cS'}(x_i)=FP_{\cS}(x_i)+c$ for all $x_i \in \{x_1, \ldots, x_5\}$ and thus in particular $FP_{\cS'}(x_3) > FP_{\cS'}(x_1)$.

In order to show the second assertion, we now consider the decomposition of the expected FP index into the expected chopped FP index and expected top FP index as given in Equation~\eqref{FP_Chop_Top_Decomp} with respect to $\cS'$ and with respect to cut height $\eta = \tau_0 + \tau_1 + \tau_2 + \tau_3$ (i.e., with respect to the same cut height as discussed for $\cS$). If we cut all histories $h \in \cH(\cS')$ at height $\eta$, the expected chopped FP indices for taxa $x_1, \ldots, x_5$ on $\cS'$ are identical to the ones observed when cutting all histories evolving within $\cS$ at height $\eta$. In particular, we have
\begin{align*}
    \bE[FP_{\text{chop}}^\eta(x_1)\vert \cS'] &\approx 2.5661 >  2.5641 \approx
    \bE[FP_{\text{chop}}^\eta(x_3)\vert \cS'].
\end{align*}
It remains to consider $\bE[FP_\text{top}^\eta(x_1)\vert \cS']$ and $\bE[FP_\text{top}^\eta(x_3)\vert \cS']$. Instead of calculating these expected values explicitly, we will consider $\bE[FP_\text{top}^\eta(x_1)\vert \cS']-\bE[FP_\text{top}^\eta(x_3)\vert \cS'] = \bE[FP_\text{top}^\eta(x_1) - FP_\text{top}^\eta(x_3) \vert \cS']$ and show that this expectation is strictly positive. Together with the fact that $\bE[FP_{\text{chop}}^\eta(x_1) \vert \cS'] > \bE[FP_{\text{chop}}^\eta(x_3) \vert \cS'] $ this will complete the proof. 

For each history $h \in \cH(\cS')$ with top part $h^{\mathit{top}}$, let $l_1$ denote the leaf of $h^\mathit{top}$ corresponding to the last ancestor of $x_1$ and let $l_3$ denote the leaf of $h^{\mathit{top}}$ corresponding to the last ancestor of $x_3$. Then, $l_1$ and $l_3$ are exchangeable in the sense that they either (a) form a cherry of $h^{\mathit{top}}$; or (b) if a top part history $\widetilde{h}^\mathit{top}$ does not contain the cherry $[l_1,l_3]$, there exists a top part history $\widehat{h}^\mathit{top}$ identical to $\widetilde{h}^\mathit{top}$ except that the roles of $l_1$ and $l_3$ are interchanged and such that for the corresponding complete histories $\widehat{h}$ and $\widetilde{h}$, we have $\boldsymbol{t}_{\widehat{h}} = \boldsymbol{t}_{\widetilde{h}}$ and $f_{\widehat{h}}(\widehat{h},\boldsymbol{t}_{\widehat{h}}) = f_{\widetilde{h}}(\widetilde{h},\boldsymbol{t}_{\widetilde{h}})$. 
Now, recall the extended top part $h^\mathit{top}_\mathit{ext}$ of a history $h$ is obtained by attaching to each leaf of $h^\mathit{top}$ the number of descending leaves it had in $h$ by edges of length 0. Moreover, recall that $\bE[FP_\text{top}^\eta(x_i)\vert \cS'] = \sum_{h \in \cH(\cS')} \bE[FP_{h^\mathit{top}_\mathit{ext}}(v_{x_i})\vert \cS']$, where $v_{x_i}$ is one of the newly attached leaves of $h^\mathit{top}_\mathit{ext}$ whose parent is an ancestor of $x_i$ in $h$. We thus need to analyze $ \sum_{h \in \cH(\cS')} \bE[FP_{h^\mathit{top}_\mathit{ext}}(v_{x_1}) - FP_{h^\mathit{top}_\mathit{ext}}(v_{x_3})\vert \cS']$. For this purpose, we partition the set $\cH(\cS')$ of histories associated with $\cS'$ into two disjoint subsets: (a) the set of histories whose induced top parts contain the cherry $[l_1,l_3]$, and (b) the set of histories whose induced top parts do not contain the cherry $[l_1,l_3]$. 
In either case, this will lead to six subcases for the extended top part histories depending on the number of leaves attached to $l_1$ and to $l_3$, respectively. 

\begin{enumerate}[(a)]
\item We first consider the subset of histories in $\cH(\cS')$ whose induced top parts contain the cherry $[l_1,l_3]$. Consider such a top part and denote the time of coalescence of $l_1$ and $l_3$ by $t_{13}$. Now, note that each such fixed top part gives rise to a distribution of extended top parts obtained by attaching $n_1 \in \{1,2\}$ leaves to $l_1$ and $n_3 \in \{1,2,3\}$ leaves to $l_3$ (see Figure \ref{Fig_CherryCase}), where $n_1$ and $n_3$ are distributed according to Table \ref{Table_ni_dist} (due to the fact that we cut all histories in $\cH(\cS')$ at height $\eta$, i.e., at the same cut height for which this distribution was obtained when cutting all histories in $\cH(\cS)$). Moreover, note that $n_1$ and $n_3$ are clearly independent. Referring to Figure \ref{Fig_CherryCase}, among the histories in $\cH(\cS')$ whose induced top parts contain the cherry $[l_1,l_3]$ and the time of coalescence of $l_1$ and $l_3$ is $t_{13}$, a proportion of $P(n_1=1) \cdot P(n_3=1)$ of extended top parts is of type (i), a proportion of $P(n_1=1) \cdot P(n_3=2)$ is of type (ii) and so forth. Now, considering these types of extended top parts more in-depth, we observe the following:
\begin{itemize}
    \item If the extended top part of a history $h$ is of type (i) or (v), we have $FP_{h^\mathit{top}_\mathit{ext}}(v_{x_1}) - FP^\mathit{top}_\mathit{ext}(v_{x_3}) = 0$.
    \item If the extended top part of a history $h$ is of type (iii), we have $FP_{h^\mathit{top}_\mathit{ext}}(v_{x_1}) - FP_{h^\mathit{top}_\mathit{ext}}(v_{x_3}) = \frac{2 t_{13}}{3} > 0$. Similarly, if the extended top part of a history $h$ is of type (vi), we have $FP_{h^\mathit{top}_\mathit{ext}}(v_{x_1}) - FP_{h^\mathit{top}_\mathit{ext}}(v_{x_3}) = \frac{t_{13}}{6} > 0$.
    \item If the extended top part of a history $h$ is of type (ii), we have $FP_{h^\mathit{top}_\mathit{ext}}(v_{x_1}) - FP_{h^\mathit{top}_\mathit{ext}}(v_{x_3}) = \frac{t_{13}}{2} > 0$, and if it is of type (iv), we have $FP_{h^\mathit{top}_\mathit{ext}}(v_{x_1}) - FP_{h^\mathit{top}_\mathit{ext}}(v_{x_3}) = \frac{-t_{13}}{2} < 0$. However, as type (ii) has probability $P(n_1=1) \cdot P(n_3=2) \approx 0.218$, whereas type (iv) has probability $P(n_1=2) \cdot P(n_3=1) \approx 0.201$, the proportion of histories with extended top parts contributing $\frac{+t_{13}}{2}$ is larger than the proportion of histories with extended top parts contributing $\frac{-t_{13}}{2}$.
\end{itemize}
In summary, we can thus conclude that 
\begin{align}
    \sum\limits_{\substack{h \in \cH(\cS'): \\ h^{top} \text{ contains cherry } [l_1,l_3]}} \bE[FP_{h^\mathit{top}_\mathit{ext}}(v_{x_1}) - FP_{h^\mathit{top}_\mathit{ext}}(v_{x_3})\vert \cS'] > 0. \label{Cherry_Histories}
\end{align}

\begin{figure}[htbp]
    \centering
    \includegraphics[scale=0.2]{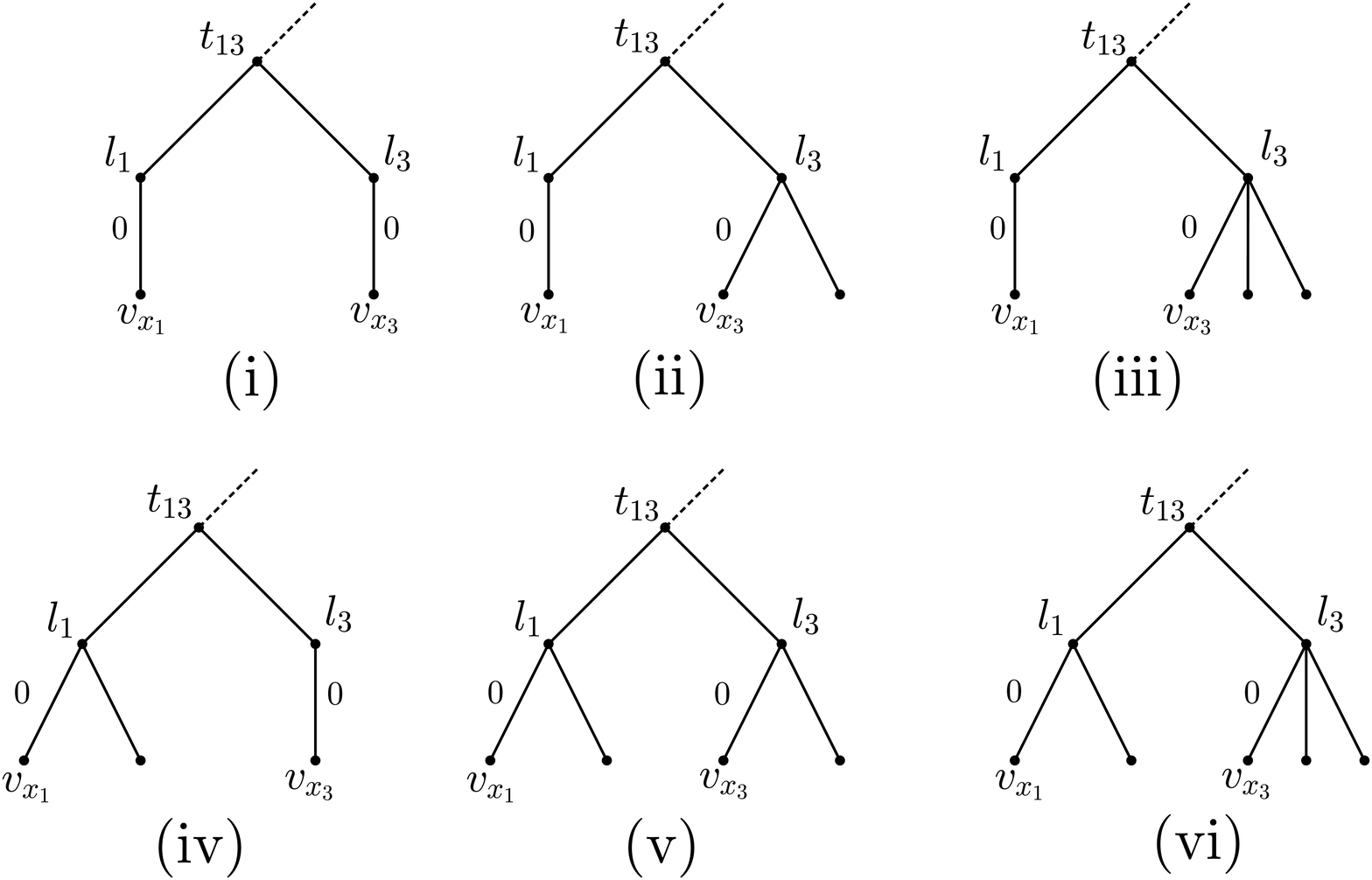}
    \caption{Types of extended top parts induced by top parts of histories containing the cherry $[l_1,l_3]$ as described in the proof of Theorem \ref{Thm_rankswap}.}
    \label{Fig_CherryCase}
\end{figure}

\item Now, consider the subset of histories in $\cH(\cS')$ whose induced top parts do not contain the cherry $[l_1,l_3]$. As explained above, for each such history $\widetilde{h}$ with top part $\widetilde{h}^{\mathit{top}}$, there exists a history $\widehat{h}$ with top part $\widehat{h}^{\mathit{top}}$ such that $\widetilde{h}^{\mathit{top}}$ and $\widehat{h}^{\mathit{top}}$ are identical up to a permutation of the leaves $l_1$ and $l_3$, and such that  $\boldsymbol{t}_{\widehat{h}} = \boldsymbol{t}_{\widetilde{h}}$ and $f_{\widehat{h}}(\widehat{h},\boldsymbol{t}_{\widehat{h}}) = f_{\widetilde{h}}(\widetilde{h},\boldsymbol{t}_{\widetilde{h}})$. Similar to case (a), each fixed pair of matching top parts gives rise to a distribution of matching pairs of extended top parts by attaching $n_1 \in \{1,2\}$ leaves to $l_1$ and $n_3 \in \{1,2,3\}$ leaves to $l_3$, where $n_1$ and $n_3$ have the distribution given in Table \ref{Table_ni_dist}. Using the same enumeration as in case (a), we thus have the following subcases:
\begin{enumerate}[(i)]
    \item $n_1 = n_3 = 1$;
    \item $n_1 = 1$ and $n_3=2$;
    \item $n_1 = 1$ and $n_3=3$;
    \item $n_1= 2$ and $n_3 = 1$;
    \item $n_1 = 2$ and $n_3 = 2$;
    \item $n_1 = 2$ and $n_3 = 3$.
\end{enumerate}
Referring to this enumeration scheme and Figure \ref{Fig_NonCherryCase} and fixing the time of coalescence leading to $lca(x_1,x_3)$, among the matching pairs of histories in $\cH(\cS)$ with induced top parts $\widetilde{h}^{\mathit{top}}$ and $\widehat{h}^{\mathit{top}}$, respectively, a proportion of $P(n_1=1) \cdot P(n_3=1)$ of pairs of extended top parts is of type (i), a proportion of $P(n_1=1) \cdot P(n_3=2)$ is of type (ii) and so forth. Similar to case (a), we now analyze these types more in-depth, and observe the following:
\begin{itemize}
    \item For pairs of matching histories $(\widetilde{h},\widehat{h})$ with extended top parts of type (i) or (v), we have
    \begin{align*}
        \underbrace{FP_{\widetilde{h}^\mathit{top}_\mathit{ext}}(v_{x_1}) - FP_{\widehat{h}^\mathit{top}_\mathit{ext}}(v_{x_3})}_{=0} + \underbrace{FP_{\widehat{h}^\mathit{top}_\mathit{ext}}(v_{x_1}) - FP_{\widetilde{h}^\mathit{top}_\mathit{ext}}(v_{x_3})}_{=0} = 0.
    \end{align*}
    \item For pairs of matching histories $(\widetilde{h},\widehat{h})$ with extended top parts of type (iii) or (vi), we have
    \begin{align*}
        \underbrace{FP_{\widetilde{h}^\mathit{top}_\mathit{ext}}(v_{x_1}) - FP_{\widehat{h}^\mathit{top}_\mathit{ext}}(v_{x_3})}_{> 0} + \underbrace{FP_{\widehat{h}^\mathit{top}_\mathit{ext}}(v_{x_1}) - FP_{\widetilde{h}^\mathit{top}_\mathit{ext}}(v_{x_3})}_{> 0} > 0.
    \end{align*}
  \item For pairs of matching histories $(\widetilde{h},\widehat{h})$ with extended top parts of type (ii), we have
  \begin{align*}
        \underbrace{FP_{\widetilde{h}^\mathit{top}_\mathit{ext}}(v_{x_1}) - FP_{\widehat{h}^\mathit{top}_\mathit{ext}}(v_{x_3})}_{> 0} + \underbrace{FP_{\widehat{h}^\mathit{top}_\mathit{ext}}(v_{x_1}) - FP_{\widetilde{h}^\mathit{top}_\mathit{ext}}(v_{x_3})}_{> 0} \eqqcolon \delta > 0,
    \end{align*} 
    whereas for pairs of matching histories $(\widetilde{h},\widehat{h})$ with extended top parts of type (iv), we have 
    \begin{align*}
        \underbrace{FP_{\widetilde{h}^\mathit{top}_\mathit{ext}}(v_{x_1}) - FP_{\widehat{h}^\mathit{top}_\mathit{ext}}(v_{x_3})}_{< 0} + \underbrace{FP_{\widehat{h}^\mathit{top}_\mathit{ext}}(v_{x_1}) - FP_{\widetilde{h}^\mathit{top}_\mathit{ext}}(v_{x_3})}_{< 0} = - \delta < 0.
    \end{align*} 
    However, as in Case (a), pairs of type (ii) have a higher probability than pairs of type (iv), and thus the proportion of matching pairs of histories with extended top parts contributing a positive value is larger than the proportion of histories with extended top parts contributing the same negative value. 
\end{itemize}
As we can partition the subset of histories in $\cH(\cS')$ whose induced top parts do not contain the cherry $[l_1,l_3]$ in pairs of matching histories $(\widetilde{h}, \widehat{h})$ as described above, we can thus also conclude that
\begin{align}
      \sum\limits_{\substack{h \in \cH(\cS'): \\ h^{top} \text{ does not contain cherry } [l_1,l_3]}} \bE[FP_{h^\mathit{top}_\mathit{ext}}(v_{x_1}) - FP_{h^\mathit{top}_\mathit{ext}}(v_{x_3})\vert \cS'] > 0. \label{NonCherry_Histories}
\end{align}
\end{enumerate}

\begin{figure}[htbp]
    \centering
    \includegraphics[scale=0.15]{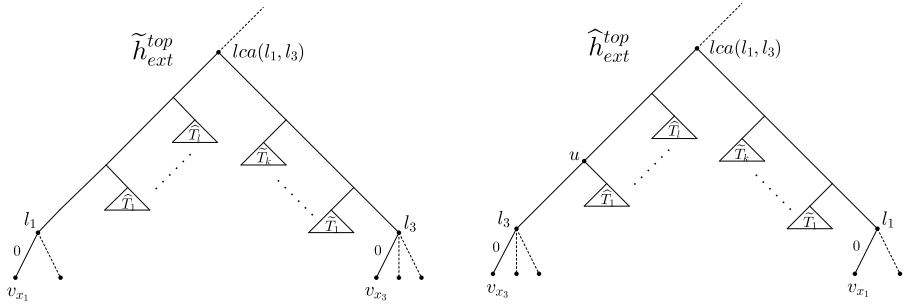}
    \caption{Types of pairs of extended top parts induced by pairs of top parts of histories not containing the cherry $[l_1,l_3]$ as described in the proof of Theorem \ref{Thm_rankswap}. The dashed lines indicate that $v_{x_1}$ may have one sibling and $v_{x_3}$ may have one or two siblings.}
    \label{Fig_NonCherryCase}
\end{figure}

Taking Equations \eqref{Cherry_Histories} and \eqref{NonCherry_Histories} together, we obtain
\begin{align*}
    \bE[FP_{\text{top}}^\eta (x_1) - FP_{\text{top}}^\eta (x_3) \vert \cS']
    &= \sum\limits_{h \in \cH(\cS')} \bE[FP_{h^\mathit{top}_\mathit{ext}}(v_{x_1}) - FP_{h^\mathit{top}_\mathit{ext}}(v_{x_3})\vert \cS']  > 0
\end{align*}
as claimed. This completes the proof.
\end{proof}

\paragraph{Remarks.} We conclude this section with a few remarks. First, note that the proof of Theorem~\ref{Thm_rankswap} establishes a slightly stronger result than the statement of Theorem~\ref{Thm_rankswap}. More precisely, we not only prove the existence of \emph{one} species tree $\cS$ on $n \geq 5$ leaves with the desired property, but in fact establish this result for a \emph{family} of such trees for $n$ large enough, as we did not fix the tree on $n-4$ leaves used in the construction, such that any tree on $n-4$ leaves can be used.\\

Second, note that in the proof of Theorem~\ref{Thm_rankswap}, we used a very particular choice of $\boldsymbol{\tau}$ leading to $FP_\cS(x_3) > FP_\cS(x_1)$ and $\bE[FP(x_3)\vert \cS] < \bE[FP(x_1)\vert \cS]$ for the species tree $\cS$ on $n=5$ leaves depicted in Figure \ref{Fig_5Leaves_Notation}. However, there are infinitely many numerical values for $\boldsymbol{\tau}$ with this property and we depict slices of the corresponding regions of parameter (branch length) space in Figure \ref{Fig_ParameterPlots}. Note, however, that these regions were solely plotted by considering $FP_\cS(x_3) > FP_\cS(x_1)$ and $\bE[FP(x_3)\vert \cS] < \bE[FP(x_1)\vert \cS]$ for the species tree $\cS$ on $n=5$ leaves depicted in Figure \ref{Fig_5Leaves_Notation}.\footnote{In particular, we did not re-compute the probability distribution corresponding to Table \ref{Table_ni_dist} and it might be the case that not every point in the plotted region has the property that $P(n_1=1) \cdot P(n_3=2) > P(n_1=2) \cdot P(n_3=1)$, which was used in the proof of Theorem \ref{Thm_rankswap} for $n > 5$.}

\begin{figure}[htbp]
    \centering
    \begin{subfigure}{0.3\textwidth}
    \includegraphics[scale=0.35]{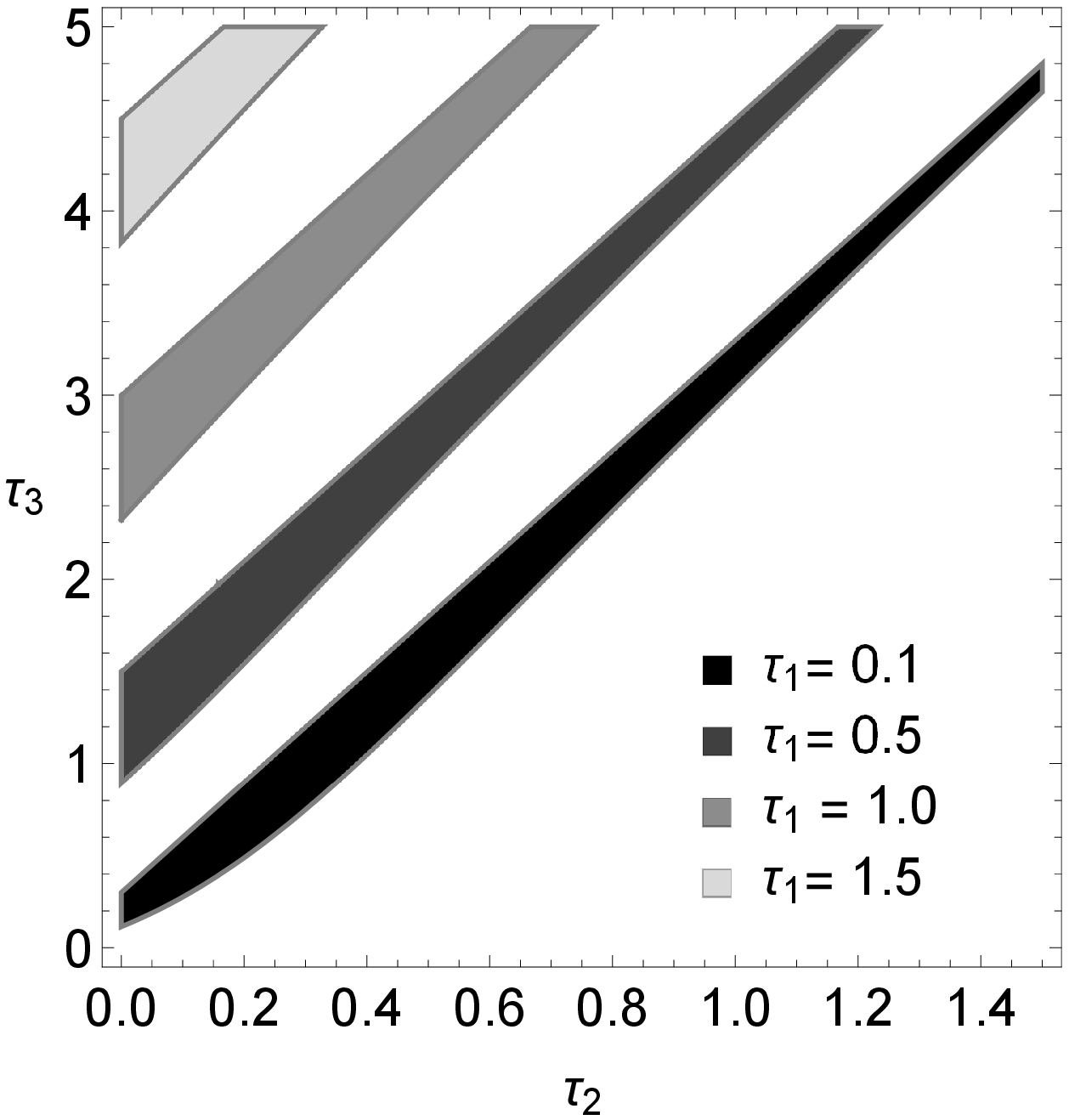}
    \caption{$\tau_1$ fixed.}
    \end{subfigure}
    \begin{subfigure}{0.3\textwidth}
    \includegraphics[scale=0.35]{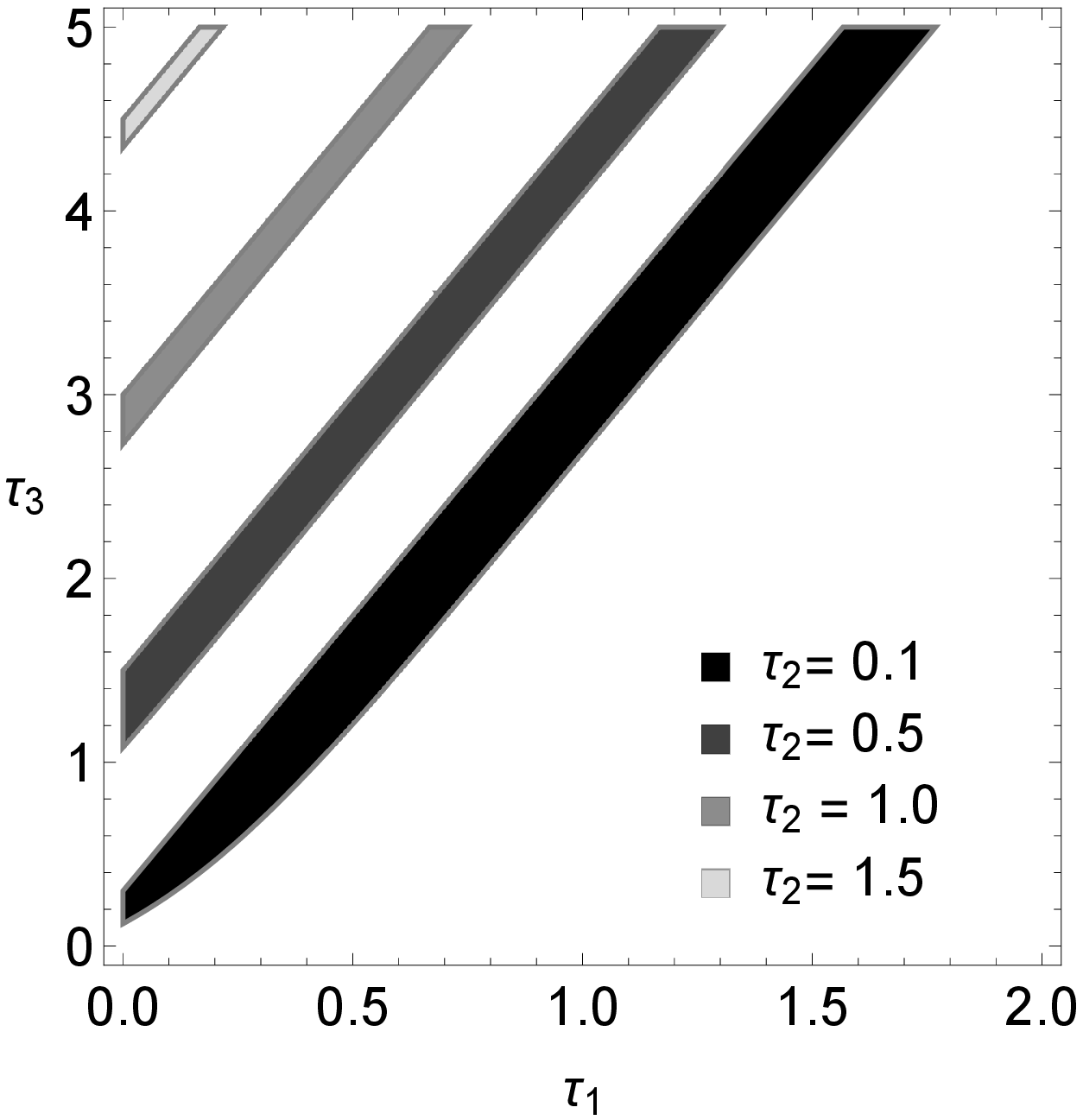}
    \caption{$\tau_2$ fixed.}
    \end{subfigure}
    \begin{subfigure}{0.3\textwidth}
    \includegraphics[scale=0.35]{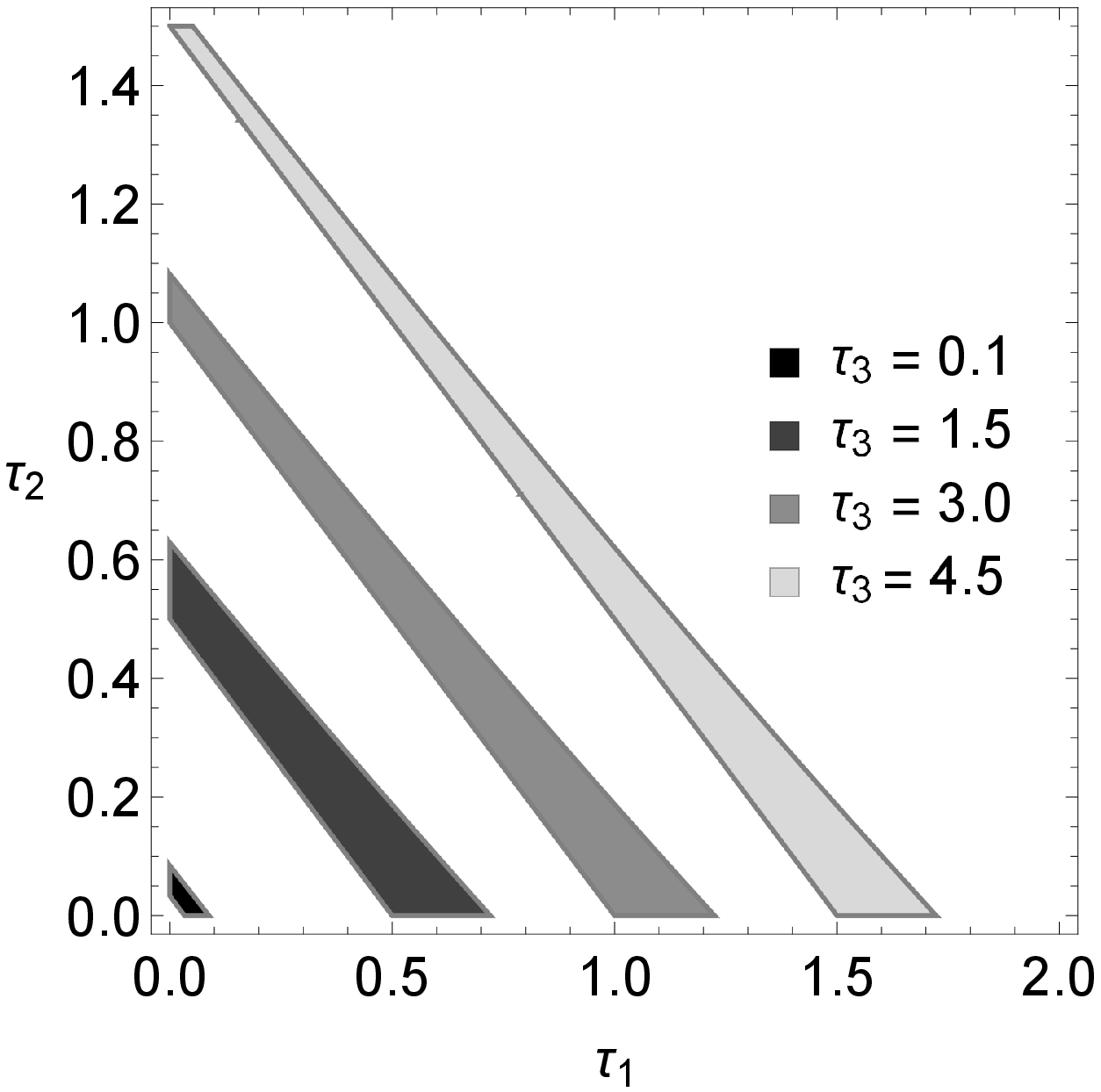}
    \caption{$\tau_3$ fixed.}
    \end{subfigure}
    \caption{Slices of parameter regions (shaded) for the species tree $\cS$ depicted in Figure \ref{Fig_5Leaves_Notation} such that $FP_{\cS}(x_3) > FP_{\cS}(x_1)$, whereas $\bE[FP(x_3)\vert \cS] < \bE[FP(x_1)\vert \cS]$. In each subfigure, a different parameter is fixed and four different choices are shown, respectively.}
    \label{Fig_ParameterPlots}
\end{figure}

Third, we note that there exist two additional species trees on 5 leaves, $\cS'$ and $\cS''$ say, with the same topology as the species tree $\cS$ used in our construction, but with a different order of speciation events (see Figure \ref{Fig_5LeavesAlternativeRankings}). Expressions for the FP indices on the species trees $\cS'$ and $\cS''$ and the respective expected FP indices are given in the Appendix. It is an interesting question whether a rank swap can also occur for these two species trees. In case of the species tree $\cS'$, we can answer this affirmatively. For instance, for $\boldsymbol{\tau}=(1.0000, 0.0050, 0.0075, 0.0100)$, we have $FP_{\cS'}(x_3) \approx 1.01583 > 1.01375 = FP_{\cS'}(x_1)$, but $\bE[FP(x_3)\vert \cS'] \approx 1.8510951 < 1.8518378 \approx \bE[FP(x_1)\vert \cS']$. For the species tree $\cS''$, on the other hand, we conjecture that a rank swap is not possible, but leave a proof or counterexample to this conjecture to future research.\\

\begin{figure}[htbp]
    \centering
    \includegraphics[scale=0.175]{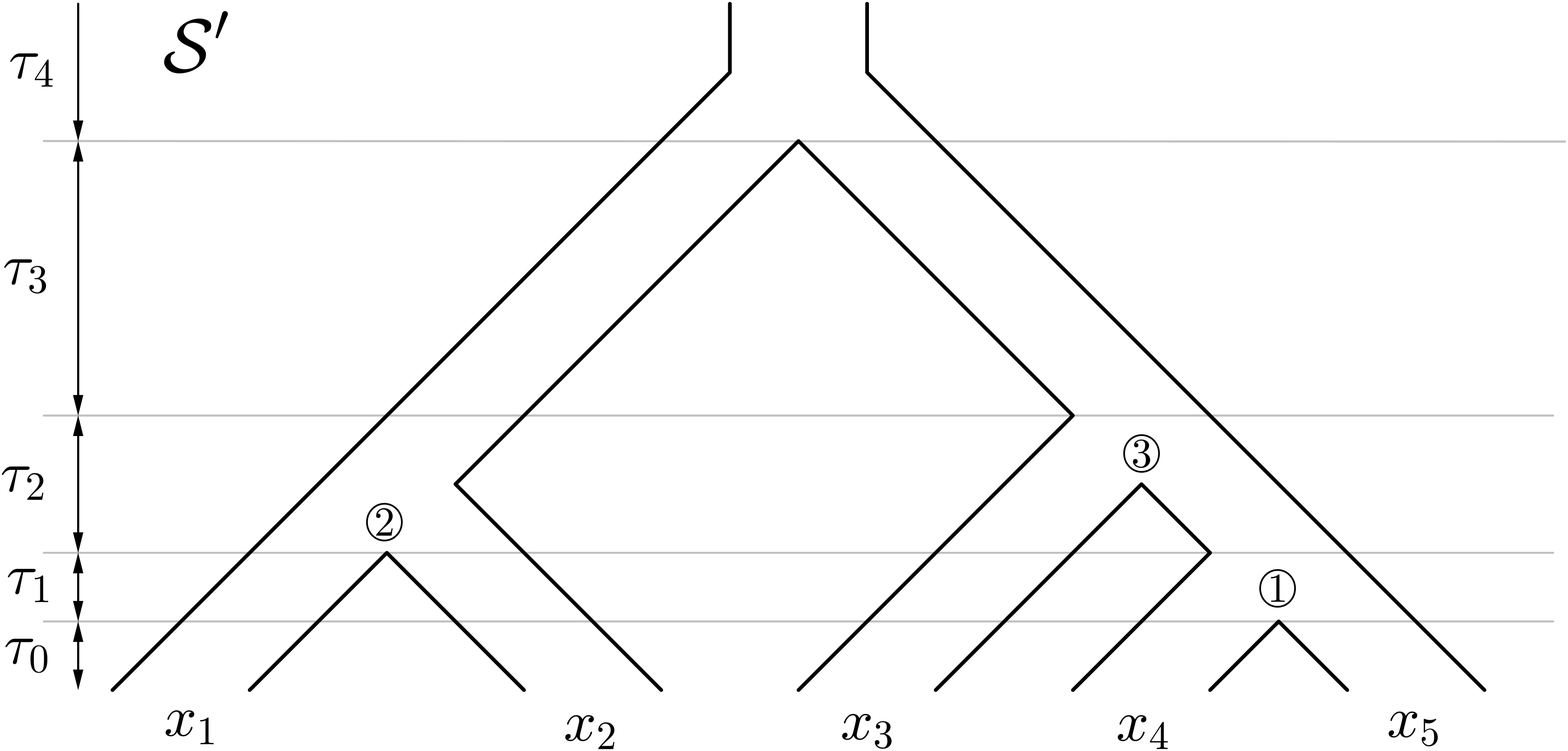} \\
    \vspace{0.1cm}
    \includegraphics[scale=0.175]{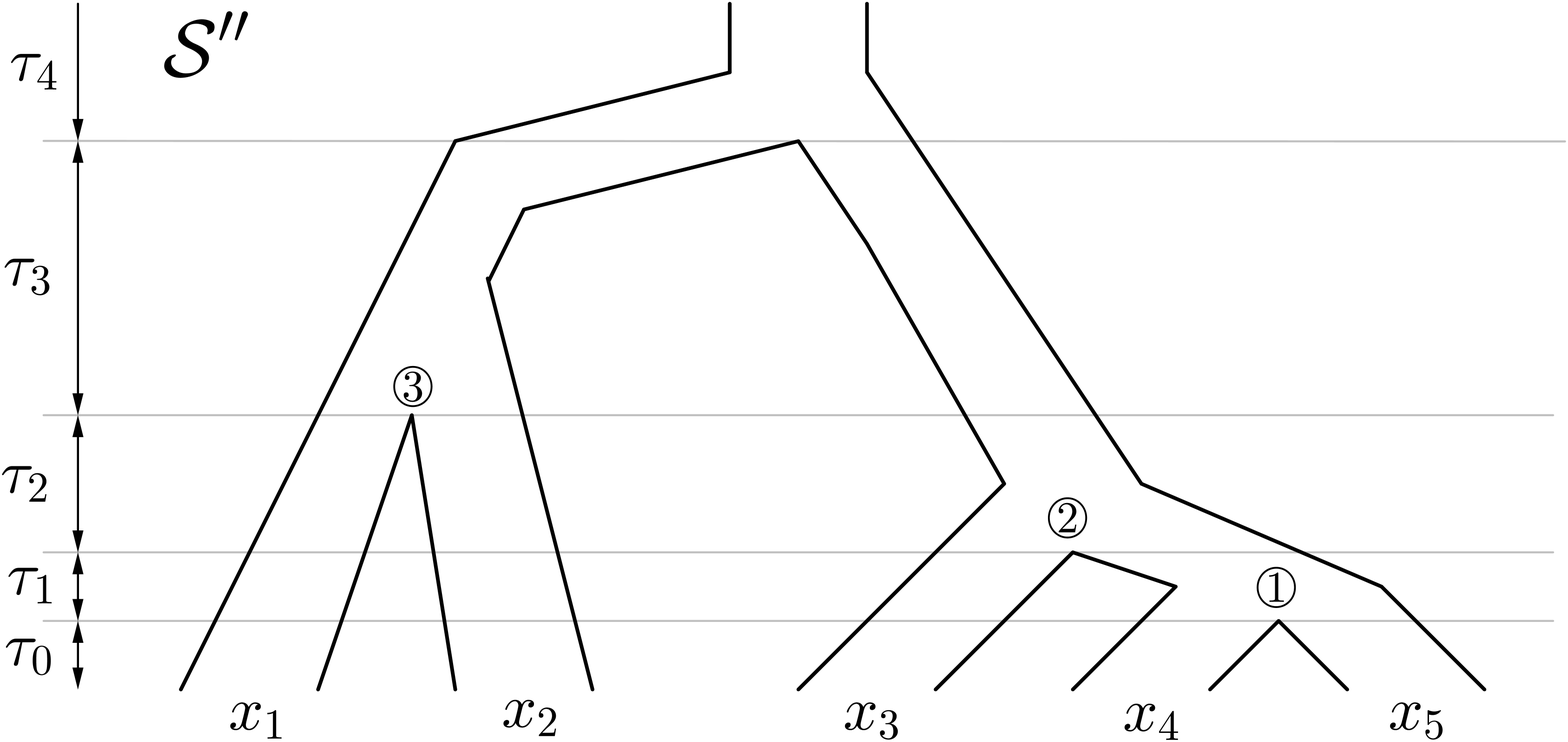} \\
    \caption{Species trees $\cS'$ and $\cS''$ that have the same unranked topology as the species tree $\cS$ depicted in Figure \ref{Fig_5Leaves_Notation}, but a different ranked topology. In particular, they depict a different order of speciation events (compare the circled numbers to see this).} 
    \label{Fig_5LeavesAlternativeRankings}
\end{figure}

Finally, we remark that there is a curious connection between some of the results presented in this manuscript and results obtained by \citet{Degnan2012a,Degnan2012} in relation to the existence of so-called \emph{anomalous ranked gene trees} (ARGTs), whereby a ranked gene tree history that does not match the ranked species tree can have a greater probability than the matching ranked gene tree. More precisely, \citet{Degnan2012a} showed that ARGTs do not exist for 3-taxon or 4-taxon species trees, and \citet{Degnan2012} showed that more generally ARGTs do not exist for caterpillar and pseudocaterpillar species trees. This directly parallels Corollary \ref{Cor_atmost4} and Theorem \ref{Thm_Caterpillar_Pseudocaterpillar} in the present manuscript, which state that the ranking induced by the FP index on the species tree and the ranking induced by the expected FP index cannot differ in these cases. Moreover, the species tree $\cS$ depicted in Figure \ref{Fig_5Leaves_Notation}, i.e., the species tree we used in the proof of Theorem \ref{Thm_rankswap} to show that there are species trees for which the rankings differ, is exactly the ranked species tree used by \cite{Degnan2012a, Degnan2012} to prove the existence of ARGTs for 5-taxon trees. As indicated above, rank swaps for the FP index exist for at least one other ranking of this species tree topology (namely for the ranked species tree $\cS'$ shown in Figure \ref{Fig_5LeavesAlternativeRankings}) and it would be interesting to see if this again parallels or maybe contrasts the case of ARGTs. Based on the characterization of species trees producing ARGTs given by \cite{Degnan2012} it is unfortunately not possible to decide if the two species trees $\cS'$ and $\cS''$ induce ARGTs.

\section{Empirical data}\label{Sec_Empirical}
In order to illustrate the variability of prioritization orders obtained from the FP index for empirical data, we considered the multilocus mammal data set of \citet{Song2012}. As the original collection of gene trees for the data was criticized in the literature for methodological errors, we used those from a re-analysis of \citet{Springer2016}. These data consist of 447 gene trees for 37 species (33 of which are mammals and four of which are outgroup species; a list of species names and their abbreviations can be found in Table \ref{tab_names} in the appendix).\footnote{Note that we use this data set as a mere example for the variability of the FP rankings across gene trees and do not attempt to make any statement about the conservation implications for any of those species.} As the gene trees provided by \citet{Springer2016} were unrooted, we rooted them using \textit{Gallus gallus} as the outgroup. We then calculated the FP index for all 37 species on all 447 gene trees. In addition, we calculated the average FP index of each species across the 447 gene trees. We also calculated the FP index for all species using two methods: the species tree estimation method SVDQuartets \citep{Chifman2014} as implemented in the PAUP* package \citep{swofford-paup} and the maximum likelihood method for the concatenated data using RAxML version 8.2.10 \citep{stamatakis-raxml}. Maximum likelihood branch lengths were computed for both trees under the GTR+$\Gamma$ model in PAUP*, additionally enforcing a molecular clock. For reproducibility, we provide the exact commands used in the appendix. Finally, the FP indices on the 447 gene trees and the two species trees as well as the average FP indices were transformed into rankings and a boxplot summarizing the results was generated (Figure \ref{Fig_CombinedBoxplot}).

As evident from the boxplot, there is a huge variability of ranking positions across the 447 gene trees for most species. In particular, there are species such as \textit{Sus scrofa} (Sus), whose ranking positions span the entire range of ranking positions possible or are close to it (as in the case of \textit{Callithrix jacchus} (Cal), \textit{Equus caballus} (Equ), or \textit{Macaca mulatta} (New)). On the other hand, there are some species such as \textit{Gorilla gorilla} (Gor), \textit{Homo sapiens} (Hom), \textit{Pan troglodytes} (Pan), and \textit{Pongo pygmaeus} (Pon) that show comparatively little variability in their ranking positions and consistently receive low ranks. These four species also show the interesting property that the ranks obtained from the average FP index across the 447 gene trees and the ranks obtained from the two species tree estimates all coincide except for one case (for Pon, the rank from the mean over gene trees differs slightly from the rank on the two species tree estimates). In general, it is interesting to note that the average FP index and the FP index on the two species tree estimates yield similar ranks in many cases. In particular, the rank of the SVDQuartets tree almost always coincides with either the rank based on the average FP index or the rank based on the concatenated data. However, there are nevertheless some interesting rank swaps. For instance, considering species \textit{Procavia capensis} (Pro) and \textit{Tupaia belangeri} (Tup), based on the average FP index,  \textit{Tupaia belangeri} receives are higher rank than \textit{Procavia capensis}, but on both species tree estimates, \textit{Procavia capensis} receives a higher rank than  \textit{Tupaia belangeri}. This last observation highlights that our theoretical results on possible rank swaps between the FP index on a species tree and the expected FP index across all gene trees associated with it (Theorem \ref{Thm_rankswap}) might occur in practice and need to be considered in the process of conservation planning.

\begin{sidewaysfigure}
    \centering
    \includegraphics[scale=0.5]{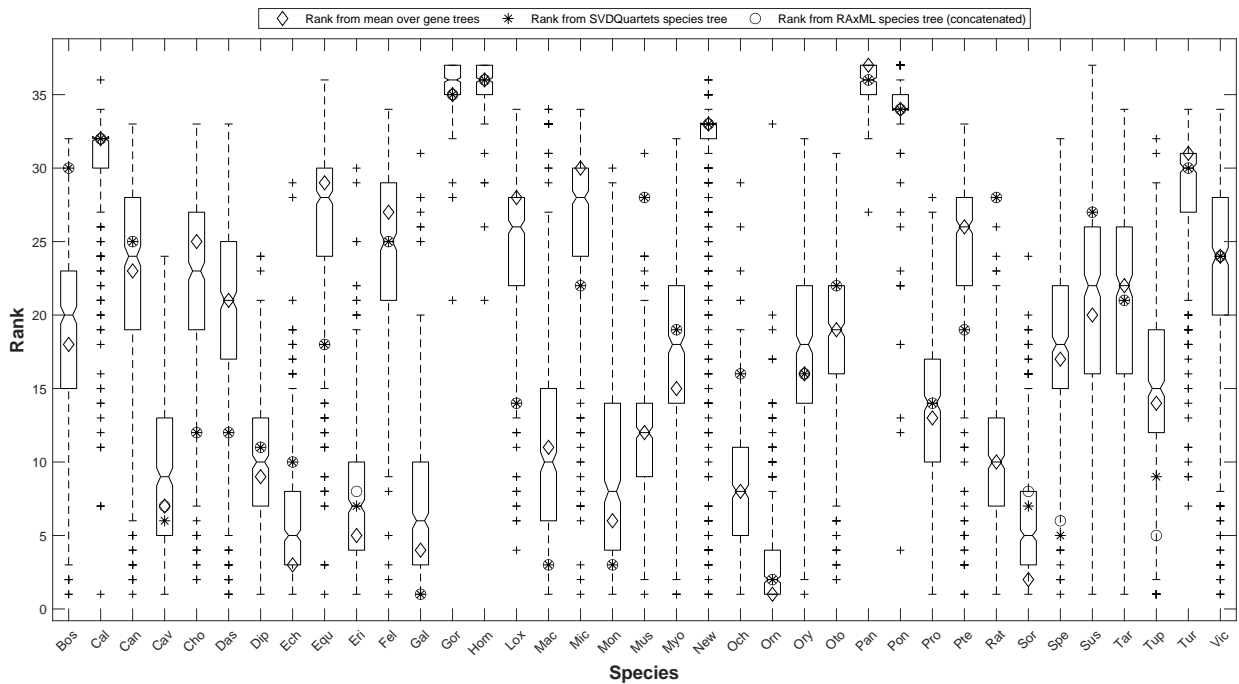}
    \caption{Boxplot of the ranking positions obtained from the FP index for the multilocus mammal data set of \citet{Song2012} consisting of 447 gene trees for 37 species (a table listing the full scientific and common names of all species can be found in the appendix). In addition, the ranking position obtained from the average FP index across the 447 gene trees (diamonds), the ranking position obtained from the FP index for a species tree estimated with SVDQuartets \citep{Chifman2014} (stars), and a species tree estimated with RAxML version 8.2.10 \citep{stamatakis-raxml} for the concatenated data (circles) are depicted.}
    \label{Fig_CombinedBoxplot}
\end{sidewaysfigure}

\section{Concluding remarks} \label{Sec_Discussion}
The FP index is a popular phylogenetic diversity index used to prioritize species for conservation. It assesses the importance of species for overall biodiversity based on their placement in an underlying phylogenetic tree. As such, the prioritization order obtained from the FP index heavily depends on the topology and branch lengths of the phylogenetic tree used. In particular, different phylogenetic trees for the same set of species might lead to different prioritization orders. 

In this note, we analyzed the variability of prioritization orders obtained from the FP index for different phylogenetic trees on the same set of species in the context of gene tree species tree discordance. More precisely, we compared the ranking obtained from the FP index on a given species tree with the ranking obtained from the expected FP index across all gene tree histories evolving within the species tree under the multispecies coalescent model. On one hand, we showed that the two prioritization orders are identical for all trees with at most 4 leaves, and more generally for all caterpillar and pseudocaterpillar trees. On the other hand, we showed that for all $n \geq 5$, there exists a species tree on $n$ leaves such that the two rankings are different. We ended by illustrating the variability in ranking orders obtained from the FP index for an empirical multilocus mammal data set. Here, we observed a high variability in ranking positions for most species, indicating that if conservation decisions are based on single genes, the prioritization order may be very different depending on which gene is used. We also observed rank swaps between the average FP index across gene trees and the FP index on two species tree estimates, empirically underpinning our theoretical results stated in Theorem \ref{Thm_rankswap}.

Our work suggests several broad questions that might be interesting to explore in the future. From a theoretical point of view, it would be interesting to analyze if there are classes of tree topologies other than the caterpillar and pseudocaterpillar trees for which the FP index on the species tree and the expected FP index across all gene tree histories necessarily induce the same prioritization order. Moreover, in cases for which there are differences between the two prioritization orders, it would be interesting to assess how different the two rankings can be in the most extreme cases. In particular, we remark that our construction for the existence of a species tree on $n \geq 5$ leaves leading to two different ranking orders relies on a switch in ranking positions among the leaves of a certain 5-taxon subtree, and if $n$ is large and this is the only difference between the two rankings, this might be considered a minor difference. It would thus be interesting to see if there are other constructions that lead to more severe differences between the two rankings. From an empirical point of view, it would also be interesting to study the variability of the prioritization orders obtained from the FP index for different gene trees and species tree estimates for a broader class of data sets, in particular for data sets containing conservation-relevant species, i.e., species at high risk of extinction. More generally, as both our theoretical results as well the analysis of the empirical data set show that different prioritization orders may be obtained from considering the FP index on a given species tree, on a particular gene tree, or even on all possible gene trees associated with the species tree, an immediate question that arises is the question of which evolutionary information should be used as the basis for conservation decisions. Should these decisions be based on the species tree, on a collection of gene trees, or taking into account a combination of both? We leave all of these questions to future research, but remark that particularly answering the last question will likely require joint efforts from theoreticians and empiricists.

\section*{Acknowledgements}
MF was supported by the joint research project \textit{\textbf{DIG-IT!}}
funded by the European Social Fund (ESF), reference: ESF/14-BM-A55-
0017/19, and the Ministry of Education, Science and Culture of Mecklenburg-Vorpommern, Germany. 
KW was supported by The Ohio State University's President's Postdoctoral Scholars Program. 
All authors thank three anonymous reviewers for detailed comments on an earlier version of this manuscript. Moreover, all authors thank James Degnan for a helpful discussion on the existence of anomalous ranked gene trees.

\bibliographystyle{abbrvnat}
\bibliography{References}

\appendix
\section{FP indices for the species trees depicted in Figure \ref{Fig_5Leaves_Notation} and Figure \ref{Fig_5LeavesAlternativeRankings}}

In the following, we give expression for the FP indices on the species tree and the expected FP indices for the species trees $\cS$, $\cS'$, and $\cS''$ on 5 leaves considered in the main part of this manuscript (Figures \ref{Fig_5Leaves_Notation} and \ref{Fig_5LeavesAlternativeRankings}). The expected FP indices were obtained using Mathematica \citep{Mathematica} to enumerate all gene tree histories in $\cH(\cS)$, respectively $\cH(\cS')$ and $\cH(\cS'')$.

\paragraph{Species tree \texorpdfstring{$\cS$}{S} (Figure \ref{Fig_5Leaves_Notation}).} 
\begin{itemize}
    \item FP indices on the species tree:
    \begin{align*}
    FP_{\cS}(x_1) &= \tau_0 + \frac{\tau_1}{2} + \frac{\tau_2}{2} + \frac{\tau_3}{2} = FP_{\cS}(x_2) \\
    FP_{\cS}(x_3) &= \tau_0 + \tau_1 + \tau_2 + \frac{\tau_3}{3} \\
    FP_{\cS}(x_4) &= \tau_0 + \tau_1 + \frac{\tau_2}{2} + \frac{\tau_3}{3} =  FP_{\cS}(x_5).
\end{align*}

\item Expected FP indices: 
\begin{align*}
    \bE\left[FP(x_1) \vert \cS\right] &= \tau_0 + \frac{\tau_1}{2} + \frac{\tau_2}{2} + \frac{\tau_3}{2} + 1 
      - \frac{ e^{-\tau_1 - 2 \tau_2 - 2 \tau_3}}{24} - \frac{e^{- \tau_1 - \tau_2 - 2 \tau_3}}{12} \\
     &\quad + \frac{e^{-\tau_2 - \tau_3}}{72} - \frac{e^{-\tau_1 - \tau_2 - \tau_3}}{12} + \frac{e^{-\tau_3}}{36} = \bE\left[FP(x_2) \vert \cS\right] \\
    \bE \left[FP(x_3) \vert \cS\right] &= \tau_0 + \tau_1 + \tau_2 + \frac{\tau_3}{3} + 1 - \frac{e^{-\tau_2}}{9} - \frac{e^{-\tau_1 - \tau_2 - 2 \tau_3}}{12} \\ 
    &\quad + \frac{e^{-\tau_2 - \tau_3}}{12} + \frac{e^{-\tau_1 - \tau_2 - \tau_3}}{18} - \frac{e^{-\tau_3}}{9}
    \\
    \bE \left[FP(x_4) \vert \cS\right] &= \tau_0 + \tau_1 + \frac{\tau_2}{2} + \frac{\tau_3}{3} + 1
    - \frac{e^{-\tau_2}}{9} - \frac{e^{-\tau_1 - 2 \tau_2 - 2 \tau_3}}{24} \\
    &\quad - \frac{e^{-\tau_1 - \tau_2 - 2 \tau_3}}{24} - \frac{e^{-\tau_2-\tau_3}}{18} + \frac{e^{-\tau_1-\tau_2-\tau_3}}{18} + \frac{e^{-\tau_3}}{36} 
    =  \bE \left[FP(x_5) \vert \cS\right].  
\end{align*}
\end{itemize}

\paragraph{Species tree \texorpdfstring{$\cS'$}{S'} (Figure \ref{Fig_5LeavesAlternativeRankings}, top)}
\begin{itemize}
    \item FP indices on the species tree:
      \begin{align*}
        FP_{\cS'}(x_1) &= \tau_0 + \tau_1 + \frac{\tau_2}{2} + \frac{\tau_3}{2} = FP_{\cS'}(x_2) \\
        FP_{\cS'}(x_3) &= \tau_0 + \tau_1 + \tau_2 + \frac{\tau_3}{3} \\
        FP_{\cS'}(x_4) &= \tau_0 + \frac{\tau_1}{2} + \frac{\tau_2}{2} + \frac{\tau_3}{3} =  FP_{\cS'}(x_5).
        \end{align*}
    \item Expected FP indices:
    \begin{align*}
         \bE\left[FP(x_1) \vert \cS'\right] &=  \tau_0 + \tau_1 + \frac{\tau_2}{2} + \frac{\tau_3}{2} + 1 - \frac{e^{-\tau_1-2\tau_2-2 \tau_3}}{24} - \frac{e^{-\tau_2 - 2 \tau_3}}{12} \\
         &\quad - \frac{e^{-\tau_2-\tau_3}}{12} + \frac{e^{-\tau_1 - \tau_2 - \tau_3}}{72} + \frac{e^{-\tau_3}}{36} = \bE\left[FP(x_2) \vert \cS'\right] \\
           \bE\left[FP(x_3) \vert \cS'\right] &= \tau_0 + \tau_1 + \tau_2 + \frac{\tau_3}{3} + 1 - \frac{e^{-\tau_1-\tau_2}}{9} - \frac{e^{-\tau_2-2 \tau_3}}{12} \\
           &\quad + \frac{e^{-\tau_2-\tau_3}}{18} + \frac{e^{-\tau_1-\tau_2-\tau_3}}{12} - \frac{e^{-\tau_3}}{9} \\
         \bE\left[FP(x_4) \vert \cS'\right] &= \tau_0 + \frac{\tau_1}{2} + \frac{\tau_2}{2} + \frac{\tau_3}{3} + 1 - \frac{e^{-\tau_1-\tau_2}}{9} - \frac{e^{-\tau_1-2\tau_2-2\tau_3}}{24} \\
         &\quad - \frac{e^{-\tau_2-2 \tau_3}}{24} + \frac{e^{-\tau_2 - \tau_3}}{18} - \frac{e^{-\tau_1 - \tau_2 - \tau_3}}{18} + \frac{e^{-\tau_3}}{36} 
         = \bE\left[FP(x_5) \vert \cS'\right].
    \end{align*}
\end{itemize}

\paragraph{Species tree \texorpdfstring{$\cS''$}{S''} (Figure \ref{Fig_5LeavesAlternativeRankings}, bottom)}
\begin{itemize}
    \item FP indices on the species tree:
    \begin{align*}
        FP_{\cS''}(x_1) &= \tau_0 + \tau_1 + \tau_2 + \frac{\tau_3}{2} = FP_{\cS''}(x_2) \\
        FP_{\cS''}(x_3) &= \tau_0 + \tau_1 + \frac{\tau_2}{3} + \frac{\tau_3}{3} \\
        FP_{\cS''}(x_4) &= \tau_0 + \frac{\tau_1}{2} + \frac{\tau_2}{3} + \frac{\tau_3}{3} =  FP_{\cS''}(x_5).
    \end{align*}
    \item Expected FP indices:
   \begin{align*}
        \bE\left[FP(x_1) \vert \cS''\right] &= \tau_0 + \tau_1 + \tau_2 + \frac{\tau_3}{2} + 1 - \frac{e^{-\tau_2-2\tau_3}}{12} - \frac{e^{-\tau_1-\tau_2-2\tau_3}}{24} \\
        &\quad + \frac{e^{-\tau_2-\tau_3}}{36} + \frac{e^{-\tau_1-\tau_2-\tau_3}}{72} - \frac{e^{-\tau_3}}{12}
        =  \bE\left[FP(x_2) \vert \cS''\right] \\
        \bE\left[FP(x_3) \vert \cS''\right] &= \tau_0 + \tau_1 + \frac{\tau_2}{3} + \frac{\tau_3}{3} + 1 - \frac{e^{-\tau_1}}{9} - \frac{e^{-\tau_2-2 \tau_3}}{12} \\ 
        &\quad - \frac{e^{-\tau_2-\tau_3}}{9} + \frac{e^{-\tau_1-\tau_2-\tau_3}}{12} + \frac{e^{- \tau_3}}{18} \\
        \bE\left[FP(x_4) \vert \cS''\right] &= \tau_0 + \frac{\tau_1}{2} + \frac{\tau_2}{3} + \frac{\tau_3}{3} + 1 - \frac{e^{-\tau_1}}{9} - \frac{e^{-\tau_2-2 \tau_3}}{24} \\
        &\quad - \frac{e^{-\tau_1-\tau_2-2 \tau_3}}{24} + \frac{e^{-\tau_2-\tau_3}}{36} - \frac{e^{-\tau_1-\tau_2-\tau_3}}{18} + \frac{e^{-\tau_3}}{18} \\
        &= \bE\left[FP(x_5) \vert \cS''\right].
   \end{align*}
\end{itemize}

\section{Species name abbreviations used in Figure \ref{Fig_CombinedBoxplot}}
\begin{table}[htbp]
    \centering
    \caption{Species name abbreviations used in Figure \ref{Fig_CombinedBoxplot}.}
    \begin{tabular}{cll} \toprule
      Abbreviation & Scientific name & Common name \\ \midrule
      Bos & \textit{Bos taurus} & cattle \\
      Cal & \textit{Callithrix jacchus} & white-tufted-ear marmoset\\
      Can & \textit{Canis familiaris} & dog \\
      Cav & \textit{Cavia porcellus} & domestic guinea pig \\
      Cho & \textit{Choloepus hoffmanni} & Hoffmann's two-fingered sloth \\
      Das & \textit{Dasypus novemcinctus} & nine-banded armadillo \\
      Dip & \textit{Dipodomys ordii } & Ord's kangaroo rat \\
      Ech & \textit{Echinops telfairi} & lesser hedgehog tenrec \\
      Equ & \textit{Equus caballus} & domestic horse \\
      Eri & \textit{Erinaceus europaeus} & European hedgehog\\
      Fel & \textit{Felis catus} & cat \\
      Gal & \textit{Gallus gallus} & chicken \\
      Gor & \textit{Gorilla gorilla} & gorilla \\
      Hom & \textit{Homo sapiens} & human \\
      Lox & \textit{Loxodonta africana} & African bush elephant \\
      Mac & \textit{Macropus eugenii} & tammar wallaby\\
      Mic & \textit{Microcebus murinus} & gray mouse lemur \\
      Mon & \textit{Monodelphis domestica} & gray short-tailed opossum \\
      Mus & \textit{Mus musculus} & house mouse \\
      Myo & \textit{Myotis lucifugus} & little brown bat \\
      New & \textit{Macaca mulatta} & rhesus macaque\\
      Och & \textit{Ochotona princeps} & American pika \\
      Orn & \textit{Ornithorhynchus anatinus} & platypus \\
      Ory & \textit{Oryctolagus cuniculus} & European rabbit \\
      Oto & \textit{Otolemur garnettii} & northern greater galago\\
      Pan & \textit{Pan troglodytes} & cimpanzee \\
      Pon & \textit{Pongo pygmaeus} & Bornean orangutan \\
      Pro & \textit{Procavia capensis} & rock hyrax\\
      Pte & \textit{Pteropus vampyrus} & large flying fox \\
      Rat & \textit{Rattus} & rat \\
      Sor & \textit{Sorex araneus} & common shrew \\
      Spe & \textit{Spermophilus tridecemlineatus} &  thirteen-lined ground squirrel \\
      Sus & \textit{Sus scrofa} & wild boar \\
      Tar & \textit{Tarsius syrichta} & Philippine tarsier\\
      Tup & \textit{Tupaia belangeri} & northern treeshrew \\
      Tur & \textit{Tursiops truncatus} & bottlenose dolphin \\
      Vic & \textit{Vicugna pacos} & alpaca \\
      \bottomrule
    \end{tabular}
    \label{tab_names}
\end{table}
\clearpage

\section{Details on species tree estimation for Section~\ref{Sec_Empirical}}
In order to obtain species tree estimates for the multilocus mammal data set of \cite{Song2012}, we downloaded the full sequences of the 447 genes from \url{https://datadryad.org/stash/dataset/doi:10.5061\%2Fdryad.3629v} (file \texttt{gene447\_final.nexus)}.

After converting to PHYLIP format, a maximum liklihood tree based on the concatenated data was obtained using RAxML version 8.2.10 \citep{stamatakis-raxml} via the following command:\\
\texttt{raxMLHPC-AVX -s gene447\_final.phy -p 12345 -m GTRGAMMA -o Gal -n raxml\_GTR.tre}\\

Afterwards, maximum likelihood branch lengths under the GTR+$\Gamma$ model and enforcing a molecular clock were computed in PAUP* \citep{swofford-paup} via the following commands:\\

\noindent \texttt{exe gene447\_final.nexus} \\
\texttt{gettrees file=raxml\_GTR.tre} \\
\texttt{set criterion=likelihood} \\
\texttt{lset nst=6 basefreq=empirical rates=gamma rmatrix=estimate shape=estimate clock=yes} \\
\texttt{lscores} \\
\texttt{savetrees RAxML\_GTR\_MLbranchlengths.tre brlens}\\

Moreover, a species tree using the species tree estimation method SVDQuartets \citep{Chifman2014} and maximum likelihood branch lengths for this species tree were obtained in PAUP* via the following commands:\\

\noindent\texttt{exe gene447\_final.nex}\\
\texttt{svdq}\\
\texttt{outgroup 35} (where species 35 corresponds to Gal)\\
\texttt{roottrees rootmethod=outgroup}\\
\texttt{set criterion=likelihood} \\
\texttt{lset nst=6 basefreq=empirical rates=gamma rmatrix=estimate shape=estimate clock=yes} \\
\texttt{lscores} \\
\texttt{savetrees svdq\_GTR\_MLbranchlengths.tre brlens}\\

\end{document}